\numberwithin{equation}{section}
\numberwithin{figure}{section}
         \let\leq=\leqslant
         \let\geq=\geqslant
\providecommand\boldsymbol[1]{\mbox{\boldmath $##1$}}}
\newcommand{\gr}{\mbox{g}}            % constant granvitational acceleratoin
\newsavebox{\astrutbox}
\sbox{\astrutbox}{\rule[-5pt]{0pt}{20pt}}
\newtheorem{theorem}{Theorem}
\newtheorem{corollary}{Corollary}
\title[Fluid particle kinematics of sea waves]{Kinematics of fluid particles on the sea surface. Hamiltonian theory}
\author[F. Fedele, C. Chandre and M. Farazmand]%
{F. Fedele$^{1,2}$%
  \thanks{Email address for correspondence: fedele@gatech.edu},\ns
C. Chandre$^3$ and M. Farazmand$^4$}
\affiliation{$^1$School of Civil and Environmental Engineering, Georgia Institute of Technology,
Atlanta, GA 30322, USA\\[\affilskip]
$^2$School of Electrical and Computer Engineering, Georgia Institute of Technology, Atlanta, GA 30322, USA\\[\affilskip]
$^3$Centre de Physique Th\'eorique - CNRS/Aix-Marseille University, 13009 Marseille, France\\[\affilskip]
$^4$ Center for Nonlinear Sciences, School of Physics, Georgia Institute of Technology,
Atlanta, GA 30332, USA}
\begin{document}
\maketitle\global\long\def\S{\mathcal{S}}
\global\long\def\eps{\varepsilon}
\global\long\def\H{\mathcal{H}}
\global\long\def\L{\mathcal{L}}
\global\long\def\M{\mathcal{M}}
\global\long\def\K{\mathbf{K}}
\global\long\def\Hilb{\mathbf{H}}
\global\long\def\R{\mathbb{R}}
\global\long\def\ud{\mathrm{d}}

\begin{abstract}
We derive the John-Sclavounos equations, describing the motion of a fluid particle on the sea surface, from first principles using  Lagrangian and Hamiltonian formalisms applied to the motion of a frictionless particle constrained on an unsteady surface. This framework leads to a number of new insights into the particle kinematics.
The main result is that vorticity generated on a stress-free surface vanishes at a wave crest when the horizontal particle velocity equals the crest propagation speed, which is the kinematic criterion for wave breaking. If this holds for the largest crest, then the symplectic two-form associated with the Hamiltonian dynamics reduces instantaneously to that associated with the motion of a particle in free flight, as if the surface did not exist. Further, exploiting the conservation of the Hamiltonian function for steady surfaces and traveling waves, we show that particle velocities remain bounded at all times, ruling out the possibility of the finite-time blowup of solutions. 
\end{abstract}
%%We study the properties of the John-Sclavounos equations for the motion of a fluid particle on the ocean surface. These are originally derived from the dynamical condition that the hydrodynamic pressure gradient on a zero-stress free surface is collinear with the outward normal to the surface. We derive the same equations from first principles using Lagrangian and Hamiltonian formalisms applied to the motion of a frictionless particle subject to gravity and constrained on a given unsteady surface. The associated symplectic structure is studied in detail. In particular, we provide a physical interpretation of the respective one- and two-forms in terms of circulation and vorticity generated on the free surface. Furthermore, we find that vorticity vanishes at a surface crest when the horizontal particle velocity equals the propagation crest speed, which is the kinematic criterion for wave breaking. If this holds for the largest crest, then the symplectic two-form reduces instantaneously to that associated with the motion of a particle in free-flight, as if the surface is non existent. In realistic oceanic waves, the large crest eventually breaks and energy of fluid particles is dissipated to turbulence as a clear manifestation of time irreversibility. Finally, exploiting the conservation of the Hamiltonian function for steady surfaces and traveling waves we prove that particle velocities stay bounded for all time, ruling out the possibility of the finite-time blowup of solutions. 
\begin{keywords}
Lagrangian; kinematics; fluid particles; Hamiltonian; wave breaking; symplectic.
\end{keywords}

%In phase space, the Hamiltonian flow is of hyperbolic type near regions where the two-form reduces to that of a free-particle. In physical space,  trajectories of a cluster of inviscid fluid particles tend to diverge away as they pass through a crest where vorticity vanishes.
%For unsteady surfaces, no finite-time blowup is observed in our numerical experiments. 

% my edits: 
% reinstated Appendix D
% removed Appendix E
% Find Appendix C useless and scary to JFM readers

\section{Introduction}

The horizontal motion of particles of an ideal fluid on a free surface %zero-stress interface that separates the fluid from air%
obeys a set of nonlinear ordinary differential equations, which only depend on the surface and its space-time gradient and curvature. \cite{John1953} derived the equations of motion for such particles on the zero-stress surface of two-dimensional (2-D) gravity waves, and~\cite{sclav05} generalized them to the three dimensional (3-D) waves. In particular, given a Cartesian
reference system $(x,y,z)$, where $z$ is along the vertical direction,
he exploited the property that the zero-stress free surface $z=\zeta(x,y,t)$
is an iso-pressure surface, and thus the hydrodynamic pressure gradient
$\nabla p$ is collinear with the outward normal $\mathbf{n}\sim\nabla(z-\zeta)$
to the surface, where $\nabla=\left(\partial_{x},\partial_{y},\partial_{z}\right)$. This implies that on the free surface 
\begin{equation}
\nabla(z-\zeta)\times\nabla p=\mathbf{0},\qquad z=\zeta.\label{cross}
\end{equation}
From Euler\textquoteright s equations, the acceleration of a fluid particle in a 3-D flow satisfies 
\begin{equation*}
\frac{{\rm d}^{2}\mathbf{r}}{{\rm d}t^{2}}=-\frac{1}{\rho}\nabla p+\mathbf{f},\label{Euler2}
\end{equation*}
where $\mathbf{r}=(x(t),y(t),z(t))$ is the instantaneous vector
position of the fluid particle and $\mathbf{f}=(0,0,-\gr)$ is the force
due to gravitational acceleration $\gr$. Then, Eq.~(\ref{cross}) can be written as
\begin{equation}
\left(-\partial_x\zeta\mathbf{i}-\partial_y\zeta\mathbf{j}+\mathbf{k}\right)\times\left(-\frac{{\rm d}^{2}\mathbf{r}}{{\rm d}t^{2}}+\mathbf{f}\right)=0,\label{cross2}
\end{equation}
where $({\bf i},{\bf j},{\bf k})$ are unit vectors
along the $x,y$ and $z$ directions, respectively. The $z$ component of the cross
product (\ref{cross2}) is redundant as it is a linear combination
of the $x$ and $y$ components. These yield the coupled equations
\begin{equation}
\begin{array}{c}
\partial_y\zeta\left(\frac{{\rm d}^{2}z}{{\rm d}t^{2}}+\gr\right)+\frac{{\rm d}^{2}y}{{\rm d}t^{2}}=0,\\
\\
\partial_x\zeta\left(\frac{{\rm d}^{2}z}{{\rm d}t^{2}}+\gr\right)+\frac{{\rm d}^{2}x}{{\rm d}t^{2}}=0.
\end{array}\label{JS1}
\end{equation}
Since the fluid particle is constrained on the free surface $\zeta$,
its vertical velocity $\dot{z}=\frac{{\rm d}z}{{\rm d}t}$ and acceleration $\ddot{z}=\frac{{\rm d}^{2}z}{{\rm d}t^{2}}$
depend on the horizontal motion $\mathbf{x}=(x(t),y(t))$. In particular, $\ddot{z}$
follows from differentiating $z(t)=\zeta(x(t),y(t),t)$ with respect
to time. Substituting the resulting $\ddot{z}$ in Eq. (\ref{JS1}) yields the John-Sclavounos (JS) equations~[see Eqs.~(2.17)-(2.18) in~\cite{sclav05}]
\begin{equation}
\begin{array}{c}
\left(1+\zeta_{,x}^{2}\right)\ddot{x}+\zeta_{,x}\zeta_{,y}\ddot{y}+\left(\zeta_{,tt}+\zeta_{,xt}\dot{x}+\zeta_{,yt}\dot{y}+\zeta_{,xx}\dot{x}^{2}+2\zeta_{,xy}\dot{x}\dot{y}+\zeta_{,yy}\dot{y}^{2}+\gr\right)\zeta_{,x}=0,\\
\\
\left(1+\zeta_{,y}^{2}\right)\ddot{y}+\zeta_{,x}\zeta_{,y}\ddot{x}+\left(\zeta_{,tt}+\zeta_{,xt}\dot{x}+\zeta_{,yt}\dot{y}+\zeta_{,xx}\dot{x}^{2}+2\zeta_{,xy}\dot{x}\dot{y}+\zeta_{,yy}\dot{y}^{2}+\gr\right)\zeta_{,y}=0,
\end{array}\label{JS}
\end{equation}
for the evolution of the horizontal fluid particle trajectories driven by the free-surface elevation and its Eulerian temporal and spatial derivatives. \textcolor{black}{Here and in the following, the subscripted commas denote partial derivatives, i.e., $\zeta_{,x}=\partial \zeta/\partial x$.} 
\textcolor{black}{We point out that, as opposed to the Euler's equation, 
the JS equations are a set of ordinary differential equations (ODEs) describing the kinematics of a single fluid particle; as such, they generate a finite-dimensional dynamical system.}

To the best of authors' knowledge, the properties and the structure of the JS equations
have not been investigated in detail. In this work, we derive and study these equations 
using first principles in order to gain mathematical and physical insights into the dynamics
of ocean waves and the inception of wave breaking.

\section{Main findings}
\textcolor{black}{
We demonstrate that the JS equations are more general than initially thought, as they can be derived from first principles 
using Lagrangian and Hamiltonian formalisms. The derivation of~\cite{John1953} assumes that the free surface $\zeta$ 
is generated by an inviscid and irrotational fluid. The derivation of \cite{sclav05}, however, does not assume irrotationality. As we show in section~\ref{sec:vor}, the same equations can be derived from an action principle describing the constrained motion of a frictionless particle on an unsteady surface and subject to gravity. In other words, the 
free surface can be any moving membrane and does not necessarily need to be formed by a fluid.}

Using the Legendre transformation, we also derive the Hamiltonian structure of the JS equations. This 
Hamiltonian structure is also confirmed using Dirac theory as shown in subsection~\ref{sec:Dirac}. The unsteady surface is arbitrary and can originate from many physical processes. In this regard, if we are interested in the fluid particle kinematics on the free surface of gravity water waves, then one needs to know the irrotational flow field that generates a zero-stress free surface separating water from air. Indeed, only if the initial particle velocity is set as that induced by the irrotational flow do the JS equations describe the kinematics of fluid particles. 

\textcolor{black}{Our main result is presented in section~\ref{sec:symplectic}, which required a mathematical description of vorticity created on unsteady free surfaces presented in subsection~\ref{sec:vorticity}.} In particular, we find that vorticity created at a zero-stress free surface vanishes at a wave crest when the horizontal particle velocity equals the propagation speed of the crest. This is the kinematic criterion for wave breaking presented in subsection~\ref{sec:kinematic}~\citep{Perlin2013,Shemer2014,Shemer2015}. Drawing on~\cite{cartan1922lessons}~(chapter II, p. 20), further insights into the particle kinematics are gained by exploiting the relation between the symplectic structure of the JS equations and the physical vorticity as explored by~\cite{Bridges_vorticity_2005} for the shallow water equations. In particular, in subsection~\ref{sec:symvor} our analysis of the Hamiltonian structure of the JS equations reveals that the associated symplectic one-form is the physical fluid circulation and certain terms of the associated two-form relate to the vorticity created on the zero-stress free surface. If the kinematic criterion for wave breaking holds for the largest crest, then the symplectic two-form instantaneously reduces to that associated with the motion of a particle in free flight, as if the free surface and vorticity did not exist.

In this regard, recent studies indicate that the inception of breaking of the largest crest of unsteady wave groups initiates  
when the particle velocity $u_x$ exceeds about $0.84$ times the crest velocity 
$V_c$~\citep{Barthelemy2015,BannerSaket2015}. In particular, none of the non-breaking or recurrent groups reach the 
threshold $B_x=u_x/V_c=0.84$, while all marginal breaking cases exceed the 
threshold~\citep{Barthelemy2015,BannerSaket2015} and eventually the particle speed $u_x$ 
overcomes the wave crest speed $V_c$ (see Figure 3 in~\cite{Barthelemy2015} and~\cite{Shemer2014}). This 
observation motivates a close examination of the space-time transport of wave energy near a large unsteady crest and 
possible local superharmonic instabilities that are triggered as the threshold $B_x$ is exceeded leading to breaking, as 
those found for steep steady waves~\citep{Longuet-HigginspartI1978,BridgesJFMhomoclinic}. 

Our results in section~\ref{sec:slow} suggest that as a wave crest 
grows and approaches breaking, the local kinetic energy $K_{e}$ on the free surface increases much faster than the potential energy $\rho \gr\zeta$ and the normal kinetic energy flux velocity $C_{K_{e}}$ tends to reduce approaching the normal fluid velocity speed $u_{n}$. Equivalently, the Lagrangian kinetic energy flux speed $C_{K_{e}}-u_{n}$ seen by a fluid particle is practically null. Consequently, there is a strong attenuation of accumulation of potential energy on the surface. Thus, at these special instants of time fluid particles on the surface behave like particles in free flight as if the free 
surface did not exist, in agreement with the analysis of the symplectic structure of the particle kinematics. Further studies on the coupling of the kinematics of surface fluid particles with the evolution of the wave field are desirable using Zakharov's (1968) Hamiltonian formalism (\nocite{Zakharov1968}\cite{Krasitskii1994,Zakharov1999}).

Finally, the Hamiltonian formulation of the JS equations also helps gain significant insight into the possibility of singular behavior of particle trajectories and trapping regions, as conjectured by Bridges~(see contributed appendix in~\cite{sclav05}). For instance, in section~\ref{sec:blowup} we exploit the conservation and special form of the Hamiltonian function for steady surfaces and traveling waves and prove that particle velocities stay bounded at all times, ruling out the possibility of the finite-time blowup of solutions. The same argument does not rule out the possible occurrences of finite-time blowups on unsteady surfaces. We also identify regions where particles are  trapped and so remain at all times if their initial velocity is bounded by a prescribed value~(see section~\ref{sec:trapping}).

\section{Hamiltonian properties of the JS equations}\label{sec:vor}
\textcolor{black}{
In the following, we first derive the JS equations from first principles using a Lagrangian formalism applied to the motion of a single frictionless particle constrained on an unsteady surface and subject to gravity (subsection~\ref{sec:lag}). In subsection~\ref{sec:Dirac} we demonstrate that the associated Hamiltonian structure follows from the Legendre transformation and it is also confirmed using Dirac theory of constrained Hamiltonian systems. Finally, in subsection~\ref{sec:symplectic2} the symplectic one- and two-forms are derived. Note that JS equations describe the kinematics of a single inviscid particle; as a result the associated phase-space dynamics is finite-dimensional. 
Further insights into the particle kinematics on a zero-stress free surface are to be gained from the analysis of the symplectic structure of the JS equations and associated differential forms, as discussed in later sections.
}

\subsection{Lagrangian formalism}\label{sec:lag}

The Lagrangian for a free particle subject to gravity in ${\mathbb R}^3$ is given by
\begin{equation*}
\mathcal{L}({\bf r},\dot{\bf r})=K-P,\label{L}
\end{equation*}
where the kinetic and potential energies
\begin{equation*}
K=\frac{1}{2}\left(\dot{x}^{2}+\dot{y}^{2}+\dot{z}^{2}\right),\qquad P=\gr z,\label{KP}
\end{equation*}
and $\mathbf{r}=(x(t),y(t),z(t))$ is the instantaneous vector particle position. Minimizing the action $\mathcal{A}=\int\mathcal{L}{\rm d}t$ over all possible paths yields the Euler--Lagrange equations
\begin{equation*}
\frac{\delta \cal A}{\delta {\bf r}}=\frac{{\rm d}}{{\rm d}t}\left(\frac{\partial \cal L}{\partial\dot{{\bf r}}}\right)-\frac{\partial \cal L}{\partial {\bf r}}=0,\label{Euler}
\end{equation*}
 or equivalently,
$\ddot{\mathbf{r}}=\mathbf{f}$, where $\mathbf{f}=(0,0,-\gr)$. 

We now assume that the particle is constrained  to move on an unsteady surface $z=\zeta(x,y,t)$. Thus, the horizontal particle motion is coupled with that of the evolving surface. The associated dynamical equations follow from the constrained Lagrangian 
\begin{equation}
\mathcal{L}_{\rm c}={\cal L}+\lambda\left[z-\zeta(x,y,t)\right],
\label{Lc}
\end{equation}
where we have introduced the Lagrange multiplier $\lambda$ to
impose that the particle always stays on the surface
$z=\zeta$. Minimizing the action with respect to $x,y,z$ and $\lambda$
yields the set of Euler--Lagrange equations
\begin{align}
\frac{{\rm d}}{{\rm d}t}\frac{\partial {\mathcal L}_{\rm c}}{\partial \dot x}-\frac{\partial {\mathcal L}_{\rm c}}{\partial x} & =\ddot{x}-\lambda_{x}(z-\zeta)+\lambda\zeta_{,x}=0,\label{Leq_1}\\
\frac{{\rm d}}{{\rm d}t}\frac{\partial {\mathcal L}_{\rm c}}{\partial \dot y}-\frac{\partial {\mathcal L}_{\rm c}}{\partial y} & =\ddot{y}-\lambda_{y}(z-\zeta)+\lambda\zeta_{,y}=0,\label{Leq_2}\\
\frac{{\rm d}}{{\rm d}t}\frac{\partial {\mathcal L}_{\rm c}}{\partial \dot z}-\frac{\partial {\mathcal L}_{\rm c}}{\partial z} & =\ddot{z}+\gr-\lambda=0,\label{Leq_3}\\
\frac{\partial {\mathcal L}_{\rm c}}{\partial\lambda} & =z-\zeta=0.  \label{Leq_4}
\end{align}
Here, the last equation imposes the constraint $z=\zeta$, which can be differentiated twice with respect to time to yield the vertical particle velocity 
\begin{equation}
\dot{z}=\zeta_{,x}\dot{x}+\zeta_{,y}\dot{y}+\zeta_{,t},\label{zdot}
\end{equation}
and acceleration
\begin{equation}
\ddot{z}=\zeta_{,x}\ddot{x}+\zeta_{,y}\ddot{y}+\zeta_{,xt}\dot{x}+\zeta_{,yt}\dot{y}+\zeta_{,xx}\dot{x}^{2}+2\zeta_{,xy}\dot{x}\dot{y}+\zeta_{,yy}\dot{y}^{2}+\zeta_{,tt},\label{zdotdot}
\end{equation}
as a function of the horizontal variables $(x,y,\dot{x},\dot{y})$. Then, from Eqs.~(\ref{Leq_1})-(\ref{Leq_2}) the horizontal trajectories satisfy the coupled ordinary differential equations (ODEs)
\begin{equation}
\begin{array}{c}
\ddot{x}+\lambda\zeta_{,x}=0,\\
\\
\ddot{y}+\lambda\zeta_{,y}=0.
\end{array}\label{xysec1}
\end{equation}
The multiplier $\lambda$ satisfies the implicit equation
\begin{equation}
\lambda=\ddot{z}+\gr,\label{Multi}
\end{equation}
which follows from Eq.~\eqref{Leq_3}. In particular, from~Eq.~\eqref{zdotdot},~\eqref{xysec1} the explicit expression for the multiplier follows as
\begin{equation*}
\lambda=\frac{\zeta_{,xt}\dot{x}+\zeta_{,yt}\dot{y}+\zeta_{,xx}\dot{x}^{2}+2\zeta_{,xy}\dot{x}\dot{y}+\zeta_{,yy}\dot{y}^{2}+\zeta_{,tt}+\gr}{1+\zeta_{,x}^2+\zeta_{,y}^2}.\label{lambda}
\end{equation*}
Furthermore, Eqs.~(\ref{xysec1}) can be written as 
\begin{equation*}
\begin{array}{c}
\ddot{x}+\left(\ddot{z}+\gr\right)\zeta_{,x}=0,\\
\\
\ddot{y}+\left(\ddot{z}+\gr\right)\zeta_{,y}=0,
\end{array}\label{xy2}
\end{equation*}
which, after substituting~Eq.~\eqref{zdotdot}, are identical to the JS equations given in~Eq.~\eqref{JS}~(see Introduction). 
%
%\begin{equation}
%\begin{array}{c}
%\left(1+\zeta_{x}^{2}\right)\ddot{x}+\zeta_{x}\zeta_{y}\ddot{y}+\left(\zeta_{tt}+\zeta_{xt}\dot{x}+\zeta_{yt}\dot{y}+\zeta_{xx}\dot{x}^{2}+2\zeta_{xy}\dot{x}\dot{y}+\zeta_{yy}\dot{y}^{2}+\gr\right)\zeta_{x}=0,%\\
%\\
%\left(1+\zeta_{y}^{2}\right)\ddot{y}+\zeta_{x}\zeta_{y}\ddot{x}+\left(\zeta_{tt}+\zeta_{xt}\dot{x}+\zeta_{yt}\dot{y}+\zeta_{xx}\dot{x}^{2}+2\zeta_{xy}\dot{x}\dot{y}+\zeta_{yy}\dot{y}^{2}+\gr\right)\zeta_{y}=0.
%\end{array}\label{JS}
%\end{equation}

The JS equations can also be obtained by minimizing the action associated with the reduced Lagrangian 
\begin{equation*}
\mathcal{\widetilde{L}}_c=\frac{1}{2}\left(\dot{x}^{2}+\dot{y}^{2}+\left(\zeta_{,t}+\zeta_{,x}\dot{x}+\zeta_{,y}\dot{y}\right)^{2}\right)-\gr\zeta,\label{Lcons}
\end{equation*}
which follows from the augmented Lagrangian in Eq.~\eqref{Lc} setting $z=\zeta$
and $\dot{z}$ equal to Eq.~\eqref{zdot}. In matrix form
\begin{equation*}
\mathcal{\widetilde{L}}_c=\frac{1}{2}\mathbf{\dot{x}}^{T}\mathbf{B}\dot{\mathbf{x}}+\mathbf{\boldsymbol{\alpha}}^{T}\dot{\mathbf{x}}-\gr\zeta+\frac{1}{2}\zeta_{,t}^{2},\label{Lmat}
\end{equation*}
where $\mathbf{x}=(x(t),y(t))$ is the horizontal vector of position and 
\begin{equation}
\mathbf{B}=\left[\begin{array}{cc}
1+\zeta_{,x}^{2} & \zeta_{,x}\zeta_{,y}\\
\zeta_{,x}\zeta_{,y} & 1+\zeta_{,y}^{2}
\end{array}\right],\qquad\boldsymbol{\alpha}=\zeta_{,t}\left[\begin{array}{c}
\zeta_{,x}\\
\zeta_{,y}
\end{array}\right].\label{Bmat}
\end{equation}
We note that $\mathbf{B}$ is symmetric and positive-definite with real eigenvalues  
\begin{equation*}
\lambda_1=1,\quad\quad \lambda_2=|\mathbf{B}|=1+\zeta_{,x}^2+\zeta_{,y}^2,
\label{eig}
\end{equation*}
and the corresponding orthogonal eigenvectors
\begin{equation*}
\mathbf{w}_1=(-\zeta_{,y},\zeta_{,x})=\nabla^{\perp}\zeta,\quad\quad \mathbf{w}_2=(\zeta_{,x},\zeta_{,y})=\nabla\zeta.
\label{eigv}
\end{equation*}
These will be useful later in the analysis of the finite time blowup of the JS equations (cf. Section~\ref{sec:blowup}). 

The generalized momentum $\mathbf{p}=(p_x,p_y)$ is a function of the horizontal particle velocity $\mathbf{\dot{x}}$ as
\begin{equation}
\mathbf{p}=\mathbf{B}\mathbf{\dot{x}}+\boldsymbol{\alpha},
\label{pmon1}
\end{equation}
where
\begin{equation}
p_{x}=\frac{\partial\mathcal{\widetilde{L}}_c}{\partial\dot{x}}=\left(1+\zeta_{,x}^{2}\right)\dot{x}+\zeta_{,x}\zeta_{,y}\dot{y}+\zeta_{,x}\zeta_{,t},
\label{px1}
\end{equation}
and
\begin{equation}
p_{y}=\frac{\partial\mathcal{\widetilde{L}}_c}{\partial\dot{y}}=\left(1+\zeta_{,y}^{2}\right)\dot{y}+\zeta_{,y}\zeta_{,x}\dot{x}+\zeta_{,y}\zeta_{,t}.
\label{py1}
\end{equation}
Then $(\mathbf{p},\mathbf{x})$ are canonically conjugate variables and the Hamiltonian follows from the Legendre transform of $\mathcal{\widetilde{L}}$ as~\citep{Morrison1998}
\begin{equation}
{\mathcal{H}}_c=p_{x}\dot{x}+p_{y}\dot{y}-\mathcal{\widetilde{L}}=\mathbf{p}^T\mathbf{\dot{x}}-\mathcal{\widetilde{L}}.
\label{HC}
\end{equation}
From Eq.~(\ref{pmon1}) the horizontal particle velocity $\mathbf{\dot{x}}$ can be written as a function of the canonical momentum $\mathbf{p}$, and the Hamiltonian can be recast as 
\begin{equation}
\mathcal{H}_c=\frac{1}{2}\left(\mathbf{p}-\boldsymbol{\alpha}\right)^{T}\mathbf{B}^{-1}\left(\mathbf{p}-\boldsymbol{\alpha}\right)+\gr\zeta-\frac{1}{2}\zeta_{,t}^{2}.\label{HV}
\end{equation}
Note that for unsteady surfaces, $\mathcal{H}_c$ is not conserved as particles behave as an open system exchanging energy with the moving surface.

The Lagrangian formalism developed above highlights a fundamental property of the JS equations.  
On the one hand, these are originally derived from the dynamical condition that the zero-stress free surface $z=\zeta$ is an iso-pressure surface~\citep{sclav05}. On the other hand, we have derived the same equations from an action principle for the constrained motion of a frictionless particle subject to gravity on an unsteady surface. The unsteady surface is arbitrary and can be generated by many physical processes. If the interest is in the kinematics of fluid particles on the free surface of gravity water waves, one must know the irrotational velocity field beneath the waves. Indeed, only if the initial particle velocity is set as that induced by the irrotational flow do the JS equations describe the kinematics of fluid particles. 

A rigorous proof of the previous statement is beyond the scope of this paper. We only point out that the horizontal velocity $\dot{\mathbf{x}}$ of a fluid particle on an irrotational water surface satisfies
\begin{equation}
\dot{\mathbf{x}}=\mathbf{U}_h(\mathbf{x}(t),\zeta(x,y,t),t),\label{xdot}
\end{equation}
where the horizontal Eulerian velocity $\mathbf{U}_h=\nabla\phi=(\phi_{,x},\phi_{,y})$ is given in terms of the velocity potential $\phi(x,y,z,t)$. 
%\begin{comment}
%This satisfies, besides the Laplace equation to impose
%fluid incompressibility within the flow domain, the dynamic Bernoulli and kinematic conditions on the free surface, that is
%\begin{equation}
%\phi_t+g\zeta+\frac{1}{2}(\left|\mathbf{v}\right|^{2}+\phi_z^2)=0,\qquad z=\zeta,\label{B3}
%\end{equation}
%and
%\begin{equation}
%\phi_z=\zeta_t+\phi_x\zeta_x+\phi_y\zeta_y,\qquad z=\zeta,\label{B3a}
%\end{equation}
%Clearly, using Eqs.~\eqref{B},\eqref{B1a} any first order
%(second order) partial derivative of $\mathbf{v}$ ($\phi$) can
%be written as a function of $\zeta$ and its derivatives only. 
%\end{comment}
Thus, we expect that the JS equations~\eqref{JS} can also be derived using Eq.~\eqref{xdot} and the Stokes equations (see section \ref{sec:kinematic}, and in particular Eqs.~\eqref{B},~\eqref{B1a}). For instance, the JS equations for the case of steady irrotational flows are derived in Appendix~\ref{app:Stokes}. 

\subsection{Hamiltonian formalism via Dirac Theory}\label{sec:Dirac}

The Lagrangian formalism developed in the previous section yields the Hamiltonian structure of the JS equations~(\ref{JS}) in terms of the canonical variables $(\mathbf{p},\mathbf{x})$. A non-canonical structure in terms of the original physical variables (position $\mathbf{x}$ and velocity $\mathbf{u}$) can be derived within  the framework of Dirac's (1950) theory of constrained Hamiltonian systems (see also~\cite{dira58}). The transformation~\eqref{pmon1} between the non-canonical and canonical variables follows from Darboux's theorem for finite-dimensional Hamiltonian systems~(see, e.g.,~\cite{Morrison1998}).

\subsubsection{Dirac theory: an introduction}

An alternative way to constrain a Hamiltonian system is to work directly within the Hamiltonian structure and consider Lagrange multipliers associated with the constraints on the Hamiltonian
$$
H_*=H+\lambda_\alpha \Phi_\alpha,
$$
where $\lambda_\alpha$ are the Lagrange multipliers, $\Phi_\alpha$ are the constraints and with an implicit summation over $\alpha$ which labels the constraints. In the case under consideration, there are two constraints: the first one is to impose that the particle is on the surface at a given time (i.e., $z=\zeta$), and the second one is to impose that the velocity of the particle coincides with the velocity of the surface at the given position and the given time (i.e., $u_z=\dot{z}$ equals $\mathrm{d}\zeta/\mathrm{d}t$). The advantage of working within the Hamiltonian framework is to obtain the expression of the constrained system within the same set of dynamical variables. For instance, in the case we consider the dynamical variables are the positions and the velocities of the particles. Imposing the constraints within the Hamiltonian framework allows one to obtain the constrained dynamics also in terms of positions and velocities. 
In a very similar way as the Lagrangian framework, the Lagrange multipliers are obtained by imposing that the constraints are conserved quantities in the dynamics given by $H_*$, i.e., ${\mathrm d}{\Phi_\alpha}/{\mathrm d}t=0$.  

Consider a parent (unconstrained) Hamiltonian system defined by the Poisson bracket 
\begin{equation}
\{F,G\}=\nabla F \cdot {\mathbb J}({\bf z})\nabla G,
\label{bracket}
\end{equation}
and Hamiltonian $\mathcal{H}({\bf z})$ with dynamical variables ${\bf z}=(z_1,\ldots,z_N)$, where ${\mathbb J}({\bf z})$ is the $N\times N$ Poisson matrix and  $\nabla =(\partial_{z_1},\ldots,\partial_{z_N})$.  
We recall that the Poisson bracket is an antisymmetric bilinear operator
\begin{equation}
\{F,G\}=-\{G,F\},\label{bilin}
\end{equation}
it satisfies the Leibniz rule 
\begin{equation}
\{F_1F_2,F_3\}=F_1\{F_2,F_3\}+\{F_1,F_3\}F_2,\label{Leib}
\end{equation} 
and the Jacobi identity
\begin{equation}
\{\{F_1,F_2\},F_3\}+\{\{F_3,F_1\},F_2\}+\{\{F_2,F_3\},F_1\}=0,\label{Jacob}
\end{equation}
for all observables $F_1({\bf z})$, $F_2({\bf z})$ and $F_3({\bf z})$ scalar functions of the dynamical variables. 

For the particle kinematics on a free surface, the dynamical variables are ${\bf z}=(x,y,z,u_x,u_y,u_z)$ and the Poisson matrix is the canonical one:
$$
{\mathbb J}=\left( \begin{array}{cc} 0 & {\mathbb I}_3\\ -{\mathbb I}_3 & 0  \end{array} \right),
$$
leading to the well-known Hamilton's equation from the equations of motion of any observable $F$ given by ${\mathrm d}F/{\mathrm d}t=\{F,H\}$ for the unconstrained dynamics generated by $H$ or by ${\mathrm d}F/{\mathrm d}t=\{F,H_*\}$ for the constrained dynamics generated by $H_*$.

The Lagrange multipliers are obtained from $\{\Phi_\alpha,H_*\}=0$ and are defined by the set of equations 
$$
\{\Phi_\alpha,\Phi_\beta\}\lambda_\beta+\{\Phi_\alpha, H\}=0,
$$ 
using the bilinearity of the Poisson bracket in~Eq.~\eqref{bilin} and the associated Leibniz rule in~Eq.~\eqref{Leib}. This equation is valid on the surface defined by the constraints $\Phi_\alpha=0$. In order to solve for the Lagrange multipliers, we define the matrix ${\mathbb C}$ with elements $C_{\alpha \beta}=\{\Phi_\alpha , \Phi_\beta\}$. If this matrix is invertible, we denote ${\mathbb D}$ with elements $D_{\alpha \beta}$ its inverse, and the Lagrange multipliers are given by $\lambda_\beta =-D_{\beta \gamma}\{\Phi_\gamma,H\}$. Therefore the equations of motion ${\mathrm d}F/{\mathrm d}t=\{F,H_*\}$ in the constrained system become
\begin{equation}
\label{eq:FHD}
\dot{F}=\{F,H\}-\{F,\Phi_\alpha\}D_{\alpha \beta}\{\Phi_\beta,H\},
\end{equation}
using again the bilinearity and the Leibniz rule of the Poisson bracket $\{\cdot,\cdot\}$~(see Eqs.~\eqref{bilin} and \eqref{Leib}). In the same way as above, these equations of motion are valid on the surface defined by the constraints $\Phi_\alpha=0$. 

Following~\cite{dira50,dira58}, Eq.~(\ref{eq:FHD}) suggests to define a new bracket for the constrained system as
\begin{equation}
\label{eq:expDB}
\{F,G\}_*=\{F,G\}-\{F,\Phi_{\alpha}\}D_{\alpha \beta}\{\Phi_{\beta},G\},
\end{equation}
such that the equations of motion for the constrained system are given by ${\mathrm d}F/{\mathrm d}t=\{F,H\}_*$, i.e., with the original Hamiltonian $H$ but a different bracket. The highly non-trivial feature is that this bracket is a Poisson bracket, i.e., it satisfies the Jacobi identity, as it was proved by Dirac. As a consequence, the constrained system defined by the Hamiltonian $H$ and the bracket $\{\cdot,\cdot\}_*$ is a Hamiltonian system. 

\subsubsection{Non-canonical Hamiltonian of the JS equations}

The two constraints we consider are explicitly written as
\begin{equation}
\Phi_1=z-\zeta(x,y,t)=0,\quad\quad \Phi_2=u_z-u_x\zeta_{,x}-u_y \zeta_{,y}-\zeta_{,t}=0.
\label{con}
\end{equation}
The matrix ${\mathbb C}$ is invertible since 
\begin{equation}
\label{eq4C}
C_{11}=C_{22}=0,\quad\quad C_{12}=-C_{21}=\{\Phi_1,\Phi_2\}=1+\zeta_{,x}^2+\zeta_{,y}^2.
\end{equation}
The Dirac bracket~\eqref{eq:expDB} specializes to
\begin{equation} 
\{F,G\}_*=\nabla F \cdot \overline{\mathbb J}_*\nabla G,\label{dbracket} 
\end{equation}
where $\nabla=\partial/\partial {\bf z}$ and ${\bf z}=(x,y,t,u_x,u_y,E)$. The Poisson matrix is given by 
\begin{equation}
\label{eq:DBzeta}
\overline{\mathbb J}_*=\left( \begin{array}{cc} 0 & \overline{\bf B}^{-1}\\ -(\overline{\bf B}^{-1})^T & \overline{\cal B}  \end{array} \right), 
\end{equation}
with 
$$
\overline{\bf B}= \left( \begin{array}{ccc} 1+\zeta_{,x}^2 & \zeta_{,x} \zeta_{,y} & \zeta_{,x}\zeta_{,t} \\ \zeta_{,x}\zeta_{,y}  & 1+\zeta_{,y}^2  & \zeta_{,y}\zeta_{,t} \\ 
0 & 0  &  1 \end{array} \right),
$$
and
$$
\overline{\cal B}=\left( \begin{array}{ccc} 0 & -b_3 & b_2 \\ b_3  & 0  & -b_1 \\ 
-b_2 & b_1 &  0 \end{array} \right).
$$
The vector ${\bf b}_{\rm m}=(b_1,b_2,b_3)$ given by
\begin{equation}
{\bf b}_{\rm m}=\frac{\overline{\nabla} \zeta \times \overline{\nabla}\left( u_x\zeta_{,x}+u_y\zeta_{,y}+\zeta_{,t} \right)}{1+\vert \nabla\zeta \vert^2}
=\frac{\overline{\nabla} \zeta \times\left[ \left(\overline u\cdot\overline{\nabla}\right) \overline{\nabla}\zeta\right]}{1+\vert \nabla\zeta \vert^2}.\label{bb}
\end{equation}
Here $\overline{\nabla}$ designates the gradient in space-time variables $(x,y,t)$ whereas $\nabla$ is the gradient in space variables $(x,y)$ and $\overline u=(u_x,u_y,1)$. The matrix $\overline{\bf B}$ is always invertible and its eigenvalues are $1+\zeta_{,x}^2+\zeta_{,y}^2$ and $1$ (of multiplicity two). The dynamical variable $E$ is canonically conjugate to time and corresponds to an energy variable, the amount of energy brought in by the time-dependence of the surface. 
More details on the computation of the Dirac bracket is given in Appendix~\ref{app:Dirac}. 

The Hamiltonian formulation of the reduced bracket in the physical variables $(x,y,t,u_x,u_y,E)$ is non-canonical. The constrained Hamiltonian $\overline{\mathcal H}_c$ is obtained from the free-particle Hamiltonian , replacing $z$ by $\zeta$ and $u_z$  by $u_x\zeta_{,x}+u_y\zeta_{,y}+\zeta_{,t}$ (see Appendix~\ref{app:Dirac})
\begin{equation}
\label{Hc}
\overline{\mathcal H}_c=\frac{u_x^2+u_y^2+(\zeta_{,x} u_x+\zeta_{,y} u_y+\zeta_{,t})^2}{2}+\gr \zeta +E.
\end{equation}
Then, the equations of motion are given by
%\dot{\overline{F}}=
\begin{equation}
\frac{{\rm d} \overline{F}}{{\rm d} \tau}=\{\overline{F},\overline{\mathcal{H}}_c\}_*,
\label{Fbrack}
\end{equation}
where $\overline{F}$ is any function of the dynamical variables. It follows that, as expected,
$$
\frac{{\rm d}t}{\rm d \tau}=\{t,\overline{\mathcal{H}}_c\}_*=1,
$$ 
i.e., $t=\tau$ with a proper choice of the initial time. Then, the JS equations~\eqref{JS} are given by $\dot{x}=u_x$ and $\dot{y}=u_y$ and
$$
\frac{{\rm d}u_x}{{\rm d} t}=\{u_x,\overline{\mathcal{H}}_c\}_*,\quad\quad \frac{{\rm d}u_y}{{\rm d} t}=\{u_y,\overline{\mathcal{H}}_c\}_*.
$$
Furthermore, we get an equation for the evolution of the energy $E$ as
$$
\frac{\dot{E}}{\zeta_{,t}}=\frac{\dot{u}_x}{\zeta_{,x}}=\frac{\dot{u}_y}{\zeta_{,y}}.\label{zz}
$$
%In Appendix~\ref{app:repar_time} we consider an alternative way to make the system autonomous, %where $\dot{t}$ is no longer equal to one, but to its canonically conjugate momentum. 

For a time-independent surface, the Poisson bracket can be further simplified, since the variables $(t,E)$ can be dropped. In this case, the Poisson matrix reduces to a $4\times 4$ matrix
$$
{\mathbb J}_1=\left( \begin{array}{cc} 0 & {\bf B}^{-1}\\ -({\bf B}^{-1})^\dagger & {\cal B}  \end{array} \right), 
$$
where ${\bf B}$ is given by Eq.~(\ref{Bmat}) and
$$
{\cal B}=b_3\left( \begin{array}{cc} 0 & -1\\ 1 & 0 \end{array} \right).
$$

\subsubsection{Canonical Hamiltonian via Darboux theorem }
\textcolor{black}{
Following Darboux's theorem for finite-dimensional Hamiltonian systems~(see, e.g.,~\cite{Morrison1998}), it is possible to transform the Poisson bracket defined by the 
Poisson matrix~(\ref{eq:DBzeta}) into a canonical form. In principle the canonical and non-canonical coordinates 
are equivalent. In practice, however, utilizing one is favored over the other.
For instance, working with physical variables has the advantage of lending itself to a better intuition.
Working with a canonical bracket, on the other hand, has its own advantages, e.g., allowing the use of symplectic 
algorithms developed for finite-dimensional canonical Hamiltonian systems. 
}

Here we apply Darboux's algorithm by modifying the momenta $u_x$, $u_y$ and $E$. In order to find the new momenta $p_x$, $p_y$ and $\tilde{E}$ which are canonically conjugate to $x$, $y$ and $t$ respectively, one has to solve first order linear partial differential equations of the kind $\{x,p_x\}=1$, e.g., using the method of characteristics. We restrict the search of these new variables to $p_x=p_x(x,y,t,u_x,u_y)$, $p_y=p_y(x,y,t,u_x,u_y)$ and $\tilde{E}=E+\varepsilon(x,y,t,u_x,u_y)$. 
After some algebra, the change of variables reads
\begin{eqnarray}
&& p_x=(1+\zeta_{,x}^2)u_x+\zeta_{,x}\zeta_{,y} u_y +\zeta_{,x}\zeta_{,t},
\nonumber \\
&& p_y=\zeta_{,x}\zeta_{,y} u_x+(1+\zeta_{,y}^2) u_y+\zeta_{,y}\zeta_{,t},\label{darboux}\\
&& \tilde{E}=E+\zeta_{,t}(u_x\zeta_{,x}+ u_y\zeta_{,y}+\zeta_{,t}).
\nonumber 
\end{eqnarray}
The first two equations yield the generalized momentum $\mathbf{p}=(p_x,p_y)$ as a function of the horizontal particle velocity $\mathbf{u}_h=\left(u_x,u_y\right)$ as in Eq.~\eqref{pmon1}, i.e.\ $\mathbf{p}=\mathbf{B}\mathbf{u}_h+\boldsymbol{\alpha}$, where $ \boldsymbol{\mathbf \alpha}$ and ${\bf B}$ are given by Eq.~(\ref{Bmat}). The Hamiltonian~\eqref{Hc} in terms of the canonically conjugate variables $\left(\mathbf{x},t\right)$ and $(\mathbf{p},\tilde{E})$ becomes
\begin{equation*}
\overline{\mathcal{H}}_c=\frac{1}{2}({\bf p}-\boldsymbol{\alpha}) \cdot {\bf B}^{-1}({\bf p}-\boldsymbol{\alpha})+\gr\zeta -\frac{\zeta_{,t}^2}{2}+\tilde{E}.
\label{Hdarboux}
\end{equation*}
This coincides with the Hamiltonian in Eq.~\eqref{HC} derived from the Lagrangian formalism, except for the extra variable $\tilde{E}$, canonically conjugate of the time $t$. The former is needed to make the system autonomous, as $\tilde{E}$ is the energy that the particle exchanges with the moving surface. 
%the Using this Hamiltonian in the canonically conjugate variables, we recover Eq.~(\ref{JS}) as equations of motion. However, even if the Poisson bracket has been simplified (since it is the canonical one), it is more cumbersome to recover the equations of motion than working in the non-canonical (physical) variables. 

Concerning the one-dimensional case, e.g., when $\zeta_{,y}=0$, the Hamiltonian simplifies to
$$
\mathcal{H}_c=\frac{({p_x}-\zeta_{,t}\zeta_{,x})^2}{2(1+\zeta_{,x}^2)}+\frac{{p_y}^2}{2}+\gr\zeta-\frac{\zeta_{,t}^2}{2}+\tilde{E}.
$$
Since the potential does not depend on $y$, the momentum $p_y$ is constant, so the motion in the $y$-direction is trivial. In the non-trivial direction, the reduced one-dimensional Hamiltonian becomes
$$
\mathcal{H}_{1D}=\frac{(p_x-\zeta_{,t}\zeta_{,x})^2}{2(1+\zeta_{,x}^2)}+\gr\zeta-\frac{\zeta_{,t}^2}{2}, 
$$
where we have removed $\tilde{E}$ to consider the non-autonomous Hamiltonian (which is now not conserved). 

In the time-independent case ($\zeta_{,t}=0$), the additional variables $(t,E)$ can be eliminated since the set of observables $F(x,y,{p_x},{p_y})$ constitutes a Poisson sub-algebra. The resulting Hamiltonian then reads
$$
\mathcal{H}_c=\frac{1}{2}{\bf p}\cdot {\bf B}^{-1}{\bf p}+\gr\zeta,
$$
and ${\bf p}={\bf B} {\bf u}_h$. This Hamiltonian resembles the one of the free particle, except that the metric for the kinetic energy is defined by ${\bf B}^{-1}$. 

Another case of interest is the traveling wave $\zeta(x,y,t)=\overline{\zeta}(x-ct,y)$. Changing the dynamics to the moving frame with velocity $c$ is a time-dependent change of coordinates, so it has to be performed in the autonomous framework. We perform a canonical transformation defined by $\overline{x}=x-ct$ and $\overline{E}=E+c {p_x}$, the other variables remain unchanged. Being canonical, this change of variables does not modify the expression of the bracket. The reduced (time-independent) Hamiltonian becomes
\begin{equation}
\mathcal{H}_c=\frac{1}{2}({\bf p}-\boldsymbol{\alpha}) \cdot {\bf B}^{-1}({\bf p}-\boldsymbol{\alpha})+\gr\overline{\zeta} -c^2\frac{\overline{\zeta}_x^2}{2}-c{p_x},
\label{eq:H_tw}
\end{equation}
with $\boldsymbol{\alpha}=-c\overline{\zeta}_x(\overline{\zeta}_x,\overline{\zeta}_y)$, and the canonically conjugate variables are $(\overline{x},{p_x})$ and $(y,{p_y})$. Here, the matrix ${\bf B}$ is given by Eq.~(\ref{Bmat}) where $\zeta$ is replaced by $\overline{\zeta}$. 

Hamiltonian~\eqref{eq:H_tw} can be written in the form
\begin{equation}
\mathcal{H}_c=\frac{1}{2}({\bf p}-\boldsymbol{\alpha}-c\mathbf B\mathbf e_1) \cdot {\bf 
B}^{-1}({\bf p}-\boldsymbol{\alpha}-c\mathbf B\mathbf e_1)+\gr\overline{\zeta} 
-\frac{1}{2}c^2,
\label{eq:H_tw2}
\end{equation}
with $\mathbf e_1=(1,0)^T$. Next, we express the Hamiltonian in terms of the 
particle velocity in the co-moving frame, $\overline{\mathbf u}_h=(\dot{\overline x},\dot y)$.
From the fact that $\dot{\overline x}=\partial \mathcal{H}_c/\partial p_x$ and
$\dot{y}=\partial \mathcal{H}_c/\partial p_y$, we have
\begin{equation*}
\overline{\mathbf u}_h=\mathbf B^{-1}\left({\mathbf p}-\pmb{\alpha}-c\mathbf B\mathbf e_1
\right).
\end{equation*}
Substitution in Eq.~\eqref{eq:H_tw2} yields,
\begin{equation}
\mathcal{H}_c=\frac{1}{2}\overline{\bf u}_h\cdot
{\bf B}\overline{\bf u}_h+\gr\overline{\zeta} -\frac{1}{2}c^2.
\label{Htw3}
\end{equation}
This form of the Hamiltonian will prove helpful in our analysis of the finite time blowup of the JS equations.

{\em Remark: Physical interpretation of the vector ${\bf b}_{\rm m}$ in Eq.~\eqref{bb}.}
The Poisson structure of particle motion on an unsteady surface bears some similarities with the motion of a charged particle in electromagnetic fields. In terms of the physical variables (position $\bf x$ and velocity $\bf u$), the Poisson bracket of a charge particle in a magnetic field is non-canonical with a part of the form ${\bf b}_{\rm m}\cdot (\partial_{\bf u}F \times \partial_{\bf u} G)$, called gyrobracket (responsible for the gyration motion of the particle around magnetic field lines) where ${\bf b}_{\rm m}$ is the magnetic field. In canonical coordinates, the velocity ${\bf u}$ has to be shifted by the vector potential ${\bf A}_{\rm m}$, which satisfies ${\bf b}_{\rm m}=\overline{\nabla} \times {\bf A}_{\rm m}$ (see~\cite{litt79} for more details). 

Our vector ${\bf b}_{\rm m}$ in Eq.~\eqref{bb} can be interpreted as a magnetic field in the extended phase space and the associated vector potential follows from
\begin{equation*}
(1+\vert \nabla\zeta \vert^2) {\bf b}_{\rm m}=\overline{\nabla} \times {\bf A}_{\rm m},
\end{equation*}
with
\begin{equation*}
{\bf A}_{\rm m}=-(u_x\zeta_{,x}+u_y\zeta_{,y}+\zeta_{,t})\overline{\nabla} \zeta.
\end{equation*} 
Notice that in general ${\bf b}_{\rm m}$ is not divergence-free because of the factor $(1+\vert \nabla\zeta \vert^2)$. Furthermore, in Eq.~(\ref{eq:DBzeta}), the term ${\cal B}$ generates a term ${\bf b}_{\rm m}\cdot \partial_{\bf u}F \times \partial_{\bf u} G$ in the Poisson bracket, since we notice that $\overline{\cal B}$ can be written as $\overline{\cal B}={\bf b}_{\rm m}\times$, i.e.\ it maps a vector ${\bf v}$ into ${\bf b}_{\rm m}\times {\bf v}$. Whereas when the Poisson bracket is canonical, the momenta have to be shifted by the ''vector potential'' $\boldsymbol{\alpha}$ [see Eq.~\eqref{pmon1}].

\subsection{Symplectic structure}\label{sec:symplectic2}
 
The symplectic one-form
\begin{equation}
\omega^1={p_x}{\rm d}x+{p_y}{\rm d}y+\tilde{E}{\rm d}t
\label{C_1_form}
\end{equation}
is given in terms of the canonically conjugate variables $(\mathbf{x},t,{\mathbf{p}},\tilde{E})$. The associated two-form $\omega^2={\rm d}\omega^1$, which provides the symplectic structure of the dynamics, follows by exterior differentiation of~Eq.~\eqref{C_1_form} as
\begin{equation}
\omega^2={\rm d}{p_x}\wedge {\rm d}x+{\rm d} {p_y}\wedge {\rm d}y+{\rm d}\tilde{E}\wedge {\rm d}t.\label{2form}
\end{equation}
To gain physical insights into the inviscid kinematics of fluid particles near large crests, it is convenient to write the above symplectic forms in terms of the non-canonical variables $\overline{\bf z}=(x,y,u_x,u_y,t,E)$.
Using the transformations~\eqref{darboux}, Eq.~\eqref{C_1_form} yields
\begin{eqnarray}
\omega^{1} & = & \left((1+\zeta_{,x}^{2})u_{x}+\zeta_{,x}\zeta_{y}u_{y}+\zeta_{,x}\zeta_{,t}\right){\rm d}x+\left(\zeta_{x}\zeta_{,y}u_{x}+(1+\zeta_{,y}^{2})u_{y}+\zeta_{,y}\zeta_{,t}\right){\rm d}y\nonumber\\
 &  & \qquad\qquad+\left(E+\zeta_{,t}(\zeta_{,x}u_{x}+\zeta_{,y}u_{y}+\zeta_{,t})\right){\rm d}t,\label{omega1} 
\end{eqnarray}
and Eq.~\eqref{2form} becomes
\begin{eqnarray}
\omega^{2} & = & -(1+\vert\nabla\zeta\vert^{2})b_{3}{\rm d}x\wedge{\rm d}y+
(1+\vert\nabla\zeta\vert^{2})b_{2}{\rm d}x\wedge{\rm d}t-(1+\vert\nabla\zeta\vert^{2})b_{1}{\rm d}y\wedge{\rm d}t\nonumber\\
& &+(1+\zeta_{,x}^{2}){\rm d}u_{x}\wedge{\rm d}y+\zeta_{,x}\zeta_{,y}{\rm d}u_{y}\wedge{\rm d}x+\zeta_{,x}\zeta_{,y}{\rm d}u_{x}\wedge{\rm d}y+(1+\zeta_{,y}^{2}){\rm d}u_{y}\wedge{\rm d}y\nonumber\\
& & +\zeta_{,x}\zeta_{,t}{\rm d}u_{x}\wedge{\rm d}t+\zeta_{,y}\zeta_{,t}{\rm d}u_{y}\wedge{\rm d}t+{\rm d}E\wedge{\rm d}t.\label{omega2}
\end{eqnarray}
Note that the two-form can also be obtained from the Lagrange matrix as
\begin{equation*}
\omega^2=\overline{L}_*^{\alpha \beta} {\rm d}z_\alpha \wedge {\rm d}z_\beta/2,
\label{w2q}
\end{equation*}
where $\overline{L}_*^{\alpha \beta}$ is the inverse of the Dirac-Poisson matrix~\eqref{eq:DBzeta}, that is
$$
\overline{\mathbb L}_*=\left( \begin{array}{cc} (1+\vert \nabla \zeta\vert^2)\overline{\cal B} & -\overline{\bf B}^T\\ \overline{\bf B} & 0  \end{array} \right).
$$

\section{Physical interpretation of the symplectic structure}\label{sec:symplectic}
\textcolor{black}{ 
In this section, we study in detail the symplectic structure of the JS equations obtained above. In particular, we provide a physical interpretation of the one- and two-forms~\eqref{omega1}~and~\eqref{omega2} in terms of circulation and vorticity created on the zero-stress free surface~(\cite{cartan1922lessons}, chapter II, p. 20, see also~\cite{Bridges_vorticity_2005}).
}

\textcolor{black}{
First, in subsection~\ref{sec:vorticity} we present the mathematical description of vorticity generated on a zero-stress free surface. In particular, we draw on~\cite{Longuet_Higgins_curvature1998} and extend his formulation for steady surfaces to the unsteady case. The associated velocity circulation is also derived. Then in subsection~\ref{sec:kinematic} we show that the classical kinematic criterion for wave breaking~\citep{Perlin2013} follows from the condition of vanishing vorticity at a wave crest.
}

\textcolor{black}{
Finally, in subsection~\ref{sec:symvor} our analysis reveals that the symplectic one-form of the JS equations obtained in section \ref{sec:symplectic2} is the physical fluid circulation and certain terms of the associated two-form relate to the vorticity created on the zero-stress free surface. Furthermore, if the kinematic criterion for wave breaking holds for the largest crest, then the two-form instantaneously reduces to that associated with the motion of a particle in free flight, as if the free surface and vorticity did not exist.
}

\subsection{Vorticity generated at a zero-stress free surface}\label{sec:vorticity}

\begin{figure}
\centering
\includegraphics[width=0.8\textwidth]{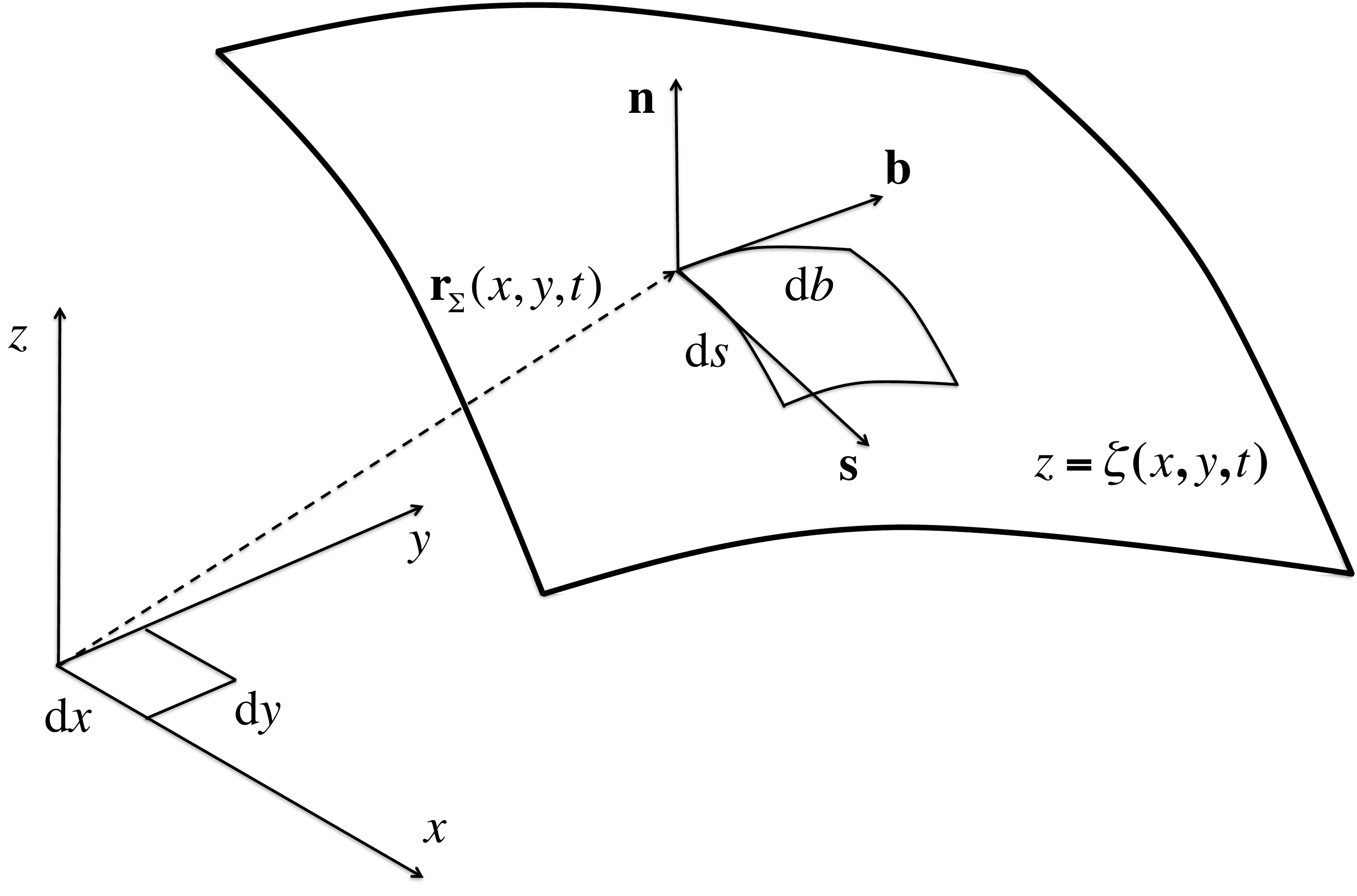}
\caption{Reference coordinate system: in the global frame $(x,y,z)$, 
$\boldsymbol{\mathbf{r}}_{\Sigma}(x,y,t)$ is a point of the free surface $z=\zeta(x,y,t)$, and 
$(\mathbf{s},\mathrm{\mathbf{b}},\mathbf{n})$ is a local frame on the surface. 
%The infinitesimal arclengths $\mathrm{d}s=\sqrt{1+\zeta_x^2)}\mathrm{d}x$ and $\mathrm{d}b=\sqrt{1+\zeta_y^2}\mathrm{d}y$.
}
\label{FIG1} 
\end{figure}

%In this subsection we present the mathematical description of vorticity generated on a zero-stress free surface. In particular, we draw on~\cite{Longuet_Higgins_curvature1998} and extend his formulation for steady surfaces to the unsteady case. The associated velocity circulation is also derived. 

In general, vorticity is generated at free surfaces whenever there
is flow past regions of surface curvature~\citep{Wu1995,Lundgren}.
This non-zero vorticity resides in a vortex sheet along the
free-surface even when the flow field beneath the
free surface is irrotational~\citep{Longuet_Higgins_curvature1998}.
The condition of zero shear stress determines the strength of the
vorticity at the surface. In the global frame $(x,y,z)$, a point
$\boldsymbol{\mathbf{r}}_{\Sigma}$ of the free-surface $\Sigma$ can be parametrized
as 
\[
\boldsymbol{\mathbf{r}}_{\Sigma}(x,y,t)=\left(\begin{array}{c}
x\\
y\\
\zeta(x,y,t)
\end{array}\right),
\]
where $x$ and $y$ are the parameters. Here we consider single valued surfaces so that $z=\zeta(x,y,t)$ is well defined at any point $(x,y)$ and time $t$.
The local frame $(\mathbf{s},\mathrm{\mathbf{b}},\mathbf{n})$ on the surface
is given by
\[
\mathbf{s}=\frac{\partial_{x}\mathbf{r}_{\Sigma}}{\left|\partial_{x}\mathbf{r}_{\Sigma}\right|},\qquad\mathbf{b}=\frac{\partial_{y}\mathbf{r}_{\Sigma}}{\left|\partial_{y}\mathbf{r}_{\Sigma}\right|},\qquad\mathbf{n}=\frac{\partial_{x}\mathbf{r}_{\Sigma}\times\partial_{y}\mathbf{r}_{\Sigma}}{\left|\partial_{x}\mathbf{r}_{\Sigma}\times\partial_{y}\mathbf{r}_{\Sigma}\right|},
\]
where $\boldsymbol{\mathbf{s}}$ and $\boldsymbol{\mathbf{b}}$ are
unit vectors tangent to the surface and $\mathbf{n}$ is the unit
vector of the outward normal (see Fig.~\ref{FIG1}). More explicitly,
\begin{equation}
\boldsymbol{\mathbf{s}}=\frac{1}{\sqrt{h_1}}\left(\begin{array}{c}
1\\
0\\
\zeta_{,x}
\end{array}\right),\qquad\boldsymbol{\mathbf{b}}=\frac{1}{\sqrt{h_2}}\left(\begin{array}{c}
0\\
1\\
\zeta_{,y}
\end{array}\right),\qquad\mathbf{n}=\frac{1}{\sqrt{h}}\left(\begin{array}{c}
-\zeta_{,x}\\
-\zeta_{,y}\\
1
\end{array}\right),\label{sbn}
\end{equation}
where
\[
h_1=\left|\partial_{x}\mathbf{r}_{\Sigma}\right|^{2}=1+\zeta_{,x}^{2},\qquad h_2=\left|\partial_{y}\mathbf{r}_{\Sigma}\right|^{2}=1+\zeta_{,y}^{2},
\]
and 
\[
h=\left|\partial_{x}\mathbf{r}_{\Sigma}\times\partial_{y}\mathbf{r}_{\Sigma}\right|^{2}=1+\zeta_{,x}^{2}+\zeta_{,y}^{2}.
\]
Note that for a 2-D surface, $\boldsymbol{\mathbf{s}}$ and $\boldsymbol{\mathbf{b}}$
are in general not orthogonal as
\begin{equation}
\alpha=\mathbf{s}\cdot\mathbf{b}=\frac{\zeta_{,x}\zeta_{,y}}{\sqrt{h_1 h_2}}\label{asb}
\end{equation}
vanishes only at crests, troughs and saddles. We also consider the intrinsic
curvilinear coordinates $s$ and $b$ on the surface (see Fig.~\ref{FIG1}) defined as
\[
s(x,y)=\int_{0}^{x}\sqrt{h_1(x',y)}\mathrm{d}x',\qquad b(x,y)=\int_{0}^{y}\sqrt{h_2(x,y')}\mathrm{d}y',
\]
and the infinitesimal arc-lengths
\begin{equation}
\mathrm{d}s=\sqrt{h_1}\mathrm{d}x,\quad\quad \mathrm{d}b=\sqrt{h_2}\mathrm{d}y.
\label{dsdb}
\end{equation}

In the global frame, the components $(u_x,u_y)$ of the horizontal particle velocity $\mathbf u_h=(u_x,u_y)$ 
components are denoted by
\begin{equation}
u_{x}=\dot{x},\qquad u_{y}=\dot{y}.
\label{xy}
\end{equation}
 The vertical particle velocity, dictated by the free-surface motion, is given by
\begin{equation}
\dot{\zeta}=\frac{{\rm d}\zeta}{{\rm d}t}=\zeta_{,t}+\dot{x}\zeta_{,x}+\dot{y}\zeta_{,y}=\zeta_{,t}+u_{x}\zeta_{,x}+u_{y}\zeta_{,y}.
\label{zetadot}
\end{equation}
The particle velocity vector written in the global coordinate frame,
\begin{equation}
\mathbf{u}=u_{x}\mathbf{i}+u_{y}\mathbf{j}+\dot{\zeta}\mathbf{k},\label{uu}
\end{equation}
must coincide with its expression in the local frame,
\begin{equation}
\mathbf{u}=u_{s}\mathbf{s}+u_{b}\mathbf{b}+u_{n}\mathbf{n},\label{um}
\end{equation}
where $u_{s}$ and $u_{b}$ are the velocity components tangential to the surface, and satisfy
\begin{equation}
u_{s}=\frac{U_{s}-\alpha U_{b}}{1-\alpha^{2}},\qquad u_{b}=\frac{U_{b}-\alpha U_{s}}{1-\alpha^{2}},\label{usub}
\end{equation}
while
\begin{equation}
u_{n}=\mathbf{u\cdot}\boldsymbol{\mathbf{n}}=\frac{-\zeta_{,x}u_{x}-\zeta_{,y}u_{y}+\dot{\zeta}}{\sqrt{h}}
=\frac{\zeta_{,t}}{\sqrt{h}}\label{un}
\end{equation}
is the particle velocity component orthogonal to the surface. 

Here, $U_{s}$ and $U_{b}$ are the projections of $\mathbf{u}$
onto $\mathbf{s}$ and $\mathbf{b}$ respectively, namely 
\begin{equation}
U_{s}=\mathbf{u\cdot}\boldsymbol{\mathbf{s}}=\frac{u_{x}+\dot{\zeta}\zeta_{,x}}{\sqrt{h_1}}=\frac{(1+\zeta_{,x}^{2})u_{x}+\zeta_{,x}\zeta_{,y}u_{y}+\zeta_{,x}\zeta_{,t}}{\sqrt{1+\zeta_{,x}^2}},\label{ut}
\end{equation}
\begin{equation}
U_{b}=\mathbf{u\cdot}\boldsymbol{\mathbf{b}}=\frac{u_{y}+\dot{\zeta}\zeta_{,y}}{\sqrt{h_2}}=\frac{(1+\zeta_{,y}^{2})u_{y}+\zeta_{,x}\zeta_{,y}u_{x}+\zeta_{,y}\zeta_{,t}}{\sqrt{1+\zeta_{,y}^2}}.\label{ub}
\end{equation}
Note that the denominators in Eq.~\eqref{usub} never vanish as, from Eq.~\eqref{asb}, 
\[
1-\alpha^{2}=\frac{h}{h_1 h_2}=\frac{1+\zeta_{,x}^{2}+\zeta_{,y}^{2}}{\left(1+\zeta_{,x}^{2}\right)\left(1+\zeta_{,y}^{2}\right)}>0.
\]
Clearly, $U_{s}$ and $U_{b}$ coincide with $u_{s}$ and $u_{b}$ on
the surface when $\mathbf{s}$ and $\mathbf{b}$ are orthogonal, i.e. $\alpha=0$.
Note that $u_{n}$ vanishes if the surface is steady or in the comoving
frame of a traveling wave. 

\textcolor{black}{
Drawing on~\cite{Longuet_Higgins_curvature1998}, on the assumption
of a zero-stress free surface any line of inviscid fluid particles parallel
to a principal axis of strain must stretch and be in rotation with
angular velocity $\frac{1}{2}\mathbf{\boldsymbol{\omega}}$, where
$\boldsymbol{\omega}$ is the vorticity vector. Since one axis of
strain is always normal to the free surface, the unit normal
$\mathbf{n}$ %must stretch (or squeeze) and %
rotate according to
\begin{equation}
\frac{{\rm d}\mathbf{n}}{{\rm d}t}=\frac{1}{2}\boldsymbol{\mathbf{\omega}}\times\mathbf{n}.\label{dndt}
\end{equation}
}
%where $\sigma_{n}$ is the eigenvalue of the rate-of-strain tensor $\mathbf{S}$
%associated with the eigenvector $\mathbf{n}$, i.e. $\mathbf{S}\mathbf{n}=\sigma_{n}\mathbf{n}$.
%\sigma_{n}\mathbf{n}%
We then decompose the vorticity as 
\[
\mathbf{\boldsymbol{\omega}}=\mathbf{\mathbf{\boldsymbol{\omega}}}_{\mathrm{\parallel}}+\omega_{\perp}\mathbf{n},
\] 
into its tangential component $\pmb{\omega}_\parallel$ and its normal component
$\omega_\perp\mathbf n$ to the surface.
Note that 
\begin{equation}
\mathbf{n}\times\left(\mathbf{\boldsymbol{\omega}}\times\mathbf{n}\right)=\left(\mathbf{n}\cdot\mathbf{n}\right)\mathbf{\boldsymbol{\omega}}-\left(\mathbf{n}\cdot\boldsymbol{\omega}\right)\mathbf{n}=\mathbf{\boldsymbol{\omega}}-\omega_{\perp}\mathbf{n}=\mathbf{\mathbf{\boldsymbol{\omega}}}_{\mathrm{\parallel}},\label{OMs}
\end{equation}
gives the vorticity aligned along the surface.
The normal vorticity $\omega_{\perp}\mathbf{n}$ cannot be generated
by the surface motion, but it depends upon both the fluid flows above
and below the surface. For example, for irrotational and inviscid
water wave fields $\mathbf{\omega}_{\perp}=0$ as there is no discontinuity
across the surface since vorticity is divergence-free. However,
there is no restriction on the vorticity $\boldsymbol{\omega}_{\parallel}$
generated by the surface motion, which is indeed discontinuous as
it is stored in a vortical sheet along the surface. From Eqs.
\eqref{dndt} and \eqref{OMs} the tangential component $\mathbf{\pmb{\omega}}_{_\parallel}$ of vorticity generated on the free surface is given by (\cite{Longuet_Higgins_curvature1998})
\begin{equation}
\pmb{\omega}_{\parallel}=2\mathbf{n}\times\frac{{\rm d}\mathbf{n}}{{\rm d}t}.\label{wpp}
\end{equation}
From Eq. (\ref{sbn}),
\[
\frac{{\rm d}\mathbf{n}}{{\rm d}t}=\frac{\mathbf{a}}{\sqrt{h}}-\frac{\dot{h}}{2h}\mathbf{n},
\]
where
\[
\mathbf{a}=-\left(\begin{array}{c}
\partial_{x}\dot{\zeta}\\
\partial_{y}\dot{\zeta}\\
0
\end{array}\right),
\]
$\dot{h}=2\nabla\zeta\cdot\nabla\dot{\zeta}$ and $\nabla=\left(\partial_{x},\partial_{y}\right)$
is the 2-D space gradient. Thus, Eq.~\eqref{wpp} yields
\begin{equation}
\pmb{\omega}_{\parallel}=2\mathbf{n}\times\frac{\mathbf{a}}{\sqrt{h}}
=\frac{2}{h}\left(\begin{array}{c}
\partial_{y}\dot{\zeta}\\
\\
-\partial_{x}\dot{\zeta}\\
\\
\zeta_{,x}\partial_{y}\dot{\zeta}-\zeta_{,y}\partial_{x}\dot{\zeta}
\end{array}\right).\label{Ws}
\end{equation}
%Note that $\mathbf{\pmb{\omega}}_{_\parallel}$ does not depend on the normal strain $\sigma_{n}$.
The $z$-component $\omega_{3}$ can be written in the compact form 
\textcolor{black}{
\begin{equation}
\omega_{3}=\boldsymbol{\omega}_{\parallel}\cdot\mathbf{k}=\frac{2}{h}\left(\zeta_{,x}\partial_{y}\dot{\zeta}-
\zeta_{,y}\partial_{x}\dot{\zeta}\right)=\frac{2}{1+\left|\nabla\zeta\right|^{2}}(\nabla\zeta\times\nabla\dot{\zeta})\cdot {\bf k}\label{W3},
\end{equation}
}
where $\dot{\zeta}$ follows from Eq.~\eqref{zetadot}. This observation is useful to interpret certain terms of the 
symplectic 2-form given in Section~\ref{sec:symplectic2}:  The $b_3$ component of ${\bf b}_{\rm m}$ can be written as
\begin{equation}
b_3=\frac{\nabla \zeta \times \nabla (u_x\zeta_{,x}+u_y\zeta_{,y}+\zeta_{,t})}{1+\vert \nabla\zeta \vert^2}\cdot {\bf k},\label{b3}
\end{equation}
where we have used the two-dimensional cross-product. Comparing Eq.~\eqref{b3} to Eq.~\eqref{W3}, we observe that $b_3=\omega_3/2$ is half the vertical $z$~component of the vorticity created on the free-surface $z=\zeta(x,y,t)$. Note that $b_3$ vanishes when the kinematic criterion~\eqref{b41} for wave breaking holds. We will not dwell too much on the geometric meaning of the components $b_1$ and $b_2$. We only point out that one can show that $b_1$ ($b_2$) is the $z$-component of space-time vorticity created on the space-time surface $z=\zeta(x,y,t)$. Thus, if we imagine trajectories $\overline{\bf z}(\tau)$ as those of ``phase-space parcels'' transported   by the Hamiltonian flow velocity  ${\rm d}\overline{\bf z}/{\rm d}\tau$, then the vector ${\bf b}_{\rm m}$ can be interpreted as space-time vorticity generated by the Hamiltonian flow. These observations will be useful below to interpret the symplectic forms associated with the Hamiltonian equations. 

In the local frame 
\begin{equation}
\mathbf{\boldsymbol{\omega}}_{\parallel}=\omega_{s}\mathbf{s}+\omega_{b}\mathbf{b},\label{omegapar}
\end{equation}
where
\begin{equation}
\omega_{s}=\frac{\Omega_{s}-\alpha\Omega_{b}}{1-\alpha^{2}},\qquad\omega_{b}=
\frac{\Omega_{b}-\alpha\Omega_{s}}{1-\alpha^{2}}.
\label{omegasb}
\end{equation}
The quantities $\Omega_{s}$ and $\Omega_{b}$ are the projections of
$\mathbf{\mathbf{\boldsymbol{\omega}}_{\parallel}}$ onto $\mathbf{s}$
and $\mathbf{b}$, respectively. That is
\begin{equation}
\Omega_{s}=\mathbf{\mathbf{\boldsymbol{\omega}}_{\parallel}\cdot}\boldsymbol{\mathbf{s}}=
2\frac{\sqrt{h_1}\partial_{y}\dot{\zeta}-\alpha\sqrt{h_2}\partial_{x}\dot{\zeta}}{h}
\label{ws},
\end{equation}
and
\begin{equation}
\Omega_{b}=\mathbf{\mathbf{\boldsymbol{\omega}}_{\parallel}\cdot}\boldsymbol{\mathbf{b}}
=-2\frac{\sqrt{h_2}\partial_{x}\dot{\zeta}-\alpha\sqrt{h_1}\partial_{y}\dot{\zeta}}{h}.
\label{wb}
\end{equation}
At the points on the surface where $\mathbf{s}$ and $\mathbf{b}$ are orthogonal ($\alpha=0$), $\Omega_{s}$ and 
$\Omega_{b}$ coincide with $\omega_{s}$ and $\omega_{b}$, respectively. 

Vorticity created on the free-surface $\Sigma$ implies that there
is non-zero circulation of the velocity $\mathbf{u}=(u_{x},u_{y},\dot{\zeta})$
along any closed path $\gamma(\mu,t)=\left(x(\mu,t),y(\mu,t),\zeta(x(\mu,t),y(\mu,t))\right)$
on $\Sigma$, parametrized by $\mu$, and it is conserved by Kelvin's
theorem (see, e.g., \cite{Eyink_notes}). From Eqs.~(\ref{zetadot})
and $\mathrm{d}z=\zeta_{,x}\mathrm{d}x+\zeta_{,y}\mathrm{d}y$, the
circulation around $\gamma$ 
\begin{equation}
\oint_{\gamma(t)}\mathbf{u}\cdot\mathbf{dx}=\oint_{\gamma(t)}u_{x}\mathrm{d}x+u_{y}\mathrm{d}y+\dot{\zeta}\mathrm{d}z,\label{CIRC}
\end{equation}
can be expressed in terms of the projections $U_{s}$ and $U_{b}$
of the  particle velocity $\mathbf{u}$ as (see Eqs.~(\ref{ut}) and (\ref{ub}))
\begin{equation}
\oint_{\gamma(t)}\mathbf{u}\cdot\mathbf{dx}=\oint_{\widetilde{\gamma}(t)}\sqrt{h_1}U_{s}\mathrm{d}x+\sqrt{h_2}U_{b}\mathrm{d}y=\oint_{\widetilde{\widetilde{\gamma}}(t)}U_{s}\mathrm{d}s+U_{b}\mathrm{d}b,\label{CIRC1}
\end{equation}
where we have used Eq.~\eqref{dsdb}, and $\widetilde{\gamma}(t)=\left(x(\mu,t),y(\mu,t)\right)$ and $\widetilde{\widetilde{\gamma}}(t)=\left(s(\mu,t),b(\mu,t)\right)$
are the projected paths of $\gamma$ onto the $x-y$ and $s-b$ planes
respectively. %
%\begin{comment}
%The rate of change of momentum induced by the vorticity on the surface
%is given by a vortex force acting along the normal to the surface,
%namely
%\[
%\mathbf{f}_{\omega}=\mathbf{u}\times\mathbf{\omega}=\left(u_{t}\omega_{b}-u_{b}\omega_{t}\right)\mathbf{n}
%\]
%which is 
%\end{comment}

Comparing Eqs.~\eqref{px1},~\eqref{py1} with Eqs.~\eqref{ut},~\eqref{ub}, we note that the infinitesimal circulation in Eq.~\eqref{CIRC1} can be written in terms of generalized momenta as
\begin{equation}
U_s \mathrm{d}s + U_b \mathrm{d}b=p_x\mathrm{d}x+p_y\mathrm{d}y,
\label{inc}
\end{equation}
where the arclengths $\mathrm{d}s$ and $\mathrm{d}b$ relate to $\mathrm{d}x$ and $\mathrm{d}y$ via Eq.~\eqref{dsdb}. Thus, the scaled generalized momenta $(p_x/\sqrt{h_1},p_y/\sqrt{h_2})$ are equal to the particle velocity projections $(U_s,U_b)$.

\subsection{Kinematic criterion for wave breaking}\label{sec:kinematic}

\textcolor{black}{In this section, we will show that the classical kinematic criterion for wave breaking~\citep{Perlin2013} follows from the condition of vanishing vorticity at a wave crest.}

First, consider the special case of unidirectional waves propagating along
$x$ and the associated 1-D surface $z=\zeta(x,t).$ In this case, $\mathbf{b}=\mathbf{j}$
is aligned along $y$ and orthogonal to $\mathbf{s}$ (see Fig.~\ref{FIG1}).
Then, from Eq.~(\ref{omegasb}) vorticity created on the surface is aligned along $y$ and it is given by
\begin{equation}
\omega_{b}=\Omega_{b}=-\frac{2}{h_1}\partial_{x}\dot{\zeta}=
\frac{2}{1+\zeta_{,x}^2}\left(\zeta_{,xt}+u_{x}\zeta_{,xx}\right).
\label{wb1}
\end{equation}

This can be written as (\cite{Lundgren}) 
\begin{equation}
\omega_{b}=-2\left(\frac{\mathrm{d}u_{n}}{\mathrm{d}s}+u_{s}K\right),\label{WLH}
\end{equation}
where
\[
K=\frac{\zeta_{,xx}}{h_1^{3/2}}=\frac{\zeta_{,xx}}{\left(1+\zeta_{,x}^2\right)^{3/2}},
\]
 is the surface curvature.  The tangential particle velocity $u_{s}$ follows from Eq.~(\ref{ut})
as
\[
u_{s}=U_{s}=\frac{h_1 u_{x}+\zeta_{,x}\zeta_{,t}}{\sqrt{h_1}},
\]
and the rate of change of the normal particle velocity $u_{n}=\zeta_{,t}/\sqrt{h_1}$
along the intrinsic curvilinear cordinates $s$ on the surface is
given by 
\[
\frac{\mathrm{d}u_{n}}{\mathrm{d}s}=\frac{\mathrm{d}u_{n}}{\mathrm{d}x}\frac{\mathrm{d}x}{\mathrm{d}s}=\frac{\zeta_{,xt}}{h_1}-\frac{\zeta_{,t}\zeta_{,x}\zeta_{,xx}}{h_1^2},
\]
where the infinitesimal arclength $\mathrm{d}s=\sqrt{h_1}\mathrm{d}x$ (see Eq.~\eqref{dsdb}). For steady surfaces $u_{n}=0$ and Eq.~\eqref{WLH} reduces to Longuet-Higgins'
(1988) result
\[
\omega_{b}=-2u_{s}K.
\]

Thus, in a comoving frame where travelling waves are steady, at crests
vorticity is positive or counter-clockwise (\cite{Longuet_Higgins_JFM_bores}).
For unsteady surfaces the normal velocity $u_{n}$ does not vanish
as it balances the underneath horizontal water flow leading to convergence
(growing crests) or divergence (decaying crests). In particular, at
a crest of a wave $\frac{\mathrm{d}u_{n}}{\mathrm{d}s}>0$ since the
wave travels forward as a result of the downward (upward) mass flow
before (after) the crest. Thus, the convergence/divergence of the
flow induced by unsteady surfaces creates negative vorticity that
can counterbalance that generated by the surface curvature. Indeed,
from Eq.~\eqref{wb1} vorticity vanishes at a crest, where $\zeta_{,x}=0$, when
\begin{equation}
\zeta_{,xt}+u_{x}\zeta_{,xx}=0.\label{B1}
\end{equation}

A physical interpretation of this condition is as follows. Consider the horizontal speed ${V}_{c}=\dot{X}_{c}$ of a crest 
located at $X_{c}(t)$ at time $t$. Since at a crest $\zeta_{,x}=0$, we have~\citep{Fedele2014_EPL}
\[
\frac{\mathrm{d}}{\mathrm{d}t}\zeta_{,x}(X_{c}(t),t)=\zeta_{,xt}+\dot{X}_{c}\zeta_{,x}=0,
\]
which implies
\begin{equation}
V_{c}=\dot{X}_{c}=-\frac{\zeta_{,xt}}{\zeta_{,xx}}.\label{Vc1}
\end{equation}
Thus, condition~\eqref{B1}  of vanishing vorticity holds when 
\begin{equation}
u_{x}=V_{c},\label{b2}
\end{equation}
or equivalently when the horizontal particle velocity $u_{x}$ equals the horizontal crest speed $V_{c}$. 

A similar result holds in three dimensions. From Eq.~\eqref{omegapar} vorticity created on a 2-D surface vanishes when
\begin{equation}
\zeta_{,xt}+u_{x}\zeta_{,xx}+u_{y}\zeta_{,xy}=0,\qquad\zeta_{,yt}+u_{x}\zeta_{,xy}+u_{y}\zeta_{,yy}=0,\label{b4}
\end{equation}
or equivalently when the horizontal particle velocity ${\mathbf{u}}_h=(u_{x},u_{y})$ equals the horizontal crest speed $\mathbf{V}_{c}=(\dot{X}_{c},\dot{Y}_{c})$, where $(X_{c}(t),Y_{c}(t))$ is the horizontal crest position. At a crest where $\nabla\zeta=\mathbf{0}$
\[
\frac{\mathrm{d}}{\mathrm{d}t}\nabla\zeta\left(X_{c}(t),Y_{c}(t),t\right)=\nabla\dot{\zeta}=\mathbf{0},
\]
or equivalently
\begin{equation}
\zeta_{,xt}+\dot{X}_{c}\zeta_{,xx}+\dot{Y}_{c}\zeta_{,xy}=0,\qquad\zeta_{,yt}+\dot{X}_{c}\zeta_{,xy}+\dot{Y}_{c}\zeta_{,yy}=0.\label{b41}
\end{equation}
Clearly, Eq. (\ref{b41}) reduces to condition~\eqref{b4} of vanishing vorticity if
\begin{equation}
\mathbf{u}_h=\mathbf{V}_{c}.
\label{br2}
\end{equation} 
%the horizontal particle velocity components $(p_{x},p_{y})$ equal the respective
%crest speed components $(\dot{X}_{c},\dot{Y}_{c})$. 
%In physical space, trajectories of a cluster
%of particles tend to diverge as they pass through a crest where vorticity
%vanishes \citep{FarazmandJFM2015}. 

Equations~\eqref{b2}~and~\eqref{br2} are the kinematic thresholds
defined as potential breaking criteria for uni- and multidirectional
water waves (see, for example~\cite{Perlin2013}). In particular,
recent experimental results by~\cite{Shemer2014} and~\cite{Shemer2015}
showed that as the largest crest of a focusing wave group grows in
time the crest speed decreases, while water particles at the crest
accelerate. Spilling breakers appear to occur when the horizontal
particle velocity exceeds the crest speed, thus confirming the kinematic
criterion for the inception of wave breaking (see also \cite{Shemer_kinematic2013,Duncan_JFM2001_spilling_1,Duncan_spilling_profile}).

%Further,~\cite{Duncan_JFM2001_spilling_1} showed that as the particle velocity exceeds the crest speed a volume (bulge) of water near the crest is shifted towards the forward face of the wave and (negative or clockwise) vorticity is generated within the bulge. \cite{LH_crestinsPART1} showed that the one unstable normal-mode perturbation of the almost-highest wave resembles the initial stage flow of spilling breakers. Here, the almost-highest wave  is the asymptotic form of a very sharp but rounded crest of a steep, irrotational gravity wave that approaches the limiting steady Stokes corner flow with a  sharp angle of $120^o$~\citep{LonguetHiggins1977,LonguetHiggins1978}. The realistic flow of a breaking wave, however, is unsteady. It is believed that the steady Stokes limiting form is by-passed, with breaking occurring at smaller steepness than the limiting Stokes value of~$0.443$ in deep water (see~\cite{Banner_Peregrine1993} and references therein).

\subsection{Symplecticity and vorticity}\label{sec:symvor}

To gain some intuition on the meaning of the differential one- and two-forms~\eqref{omega1}~and~\eqref{omega2}, we interpret the high-dimensional vector $\overline{\bf z}=(z_{\alpha})$ as the trajectory of a `fluid parcel' that is transported through the extended phase space by the Hamiltonian flow velocity
\begin{equation*}
{\rm v}_H(\tau)=\frac{{\rm d}\overline{\bf z}}{{\rm d}\tau}=\left(\frac{{\rm d} z_{\alpha}}{{\rm d} \tau}\right), 
\end{equation*}
where $z_{\alpha}$ is any of the non-canonical variables $(x,y,u_x,u_y,t,E)$ and the associated velocity
\[
\frac{{\rm d} z_{\alpha}}{{\rm d} \tau}=\{z_{\alpha},\overline{\mathcal{H}}_c\}_*,
\]
follows from the non-canonical Dirac bracket~\eqref{dbracket} (see also Eq.~\eqref{Fbrack}). Then, the symplectic one-form~\eqref{omega1} associated with the Hamiltonian flow can be interpreted as the circulation of the velocity ${\rm v}_H$ along the infinitesimal path~${\rm d}\overline{\bf z}$. 

%Each term in the two-form in Eq.~\eqref{w2q} can be interpreted as the ``phase space vorticity'' generated by the Hamiltonian flow velocity ${\rm v}_H$ on the infinitesimal area ${\rm d}z_\alpha \wedge {\rm d}z_\beta$.

On the slice $t=\mbox{const}$ of the extended phase space, the non-canonical one-form~\eqref{omega1} simplifies to
\begin{equation*}
\omega^1=\left(\mathbf B{\mathbf u}_h+\pmb{\alpha}\right)\cdot\mathrm d\mathbf
x,
\label{1f}
\end{equation*}
where we have used the identity in Eq.~\eqref{pmon1} and $\mathbf{u}_h=\left(u_x,u_y\right)$ is the horizontal particle velocity. The
one-form $\omega^1$ is invariant along closed material lines. This implies that if $\gamma(t)$ is a closed material line, the quantity
\begin{equation*}
\mathcal C(t)=\oint_{\gamma(t)}\left(\mathbf B{\mathbf u}_h+\pmb{\alpha}\right)\cdot\mathrm d\mathbf
x,
\label{eq:generalKCT}
\end{equation*}
is constant, i.e., it does not vary in time. Clearly, $\mathcal C(t)$ is the physical circulation induced by the particle motion given in Eq.~\eqref{CIRC1}, and is conserved by Kelvin's theorem (see, e.g.~\cite{Eyink_notes}).
%The canonical Hamiltonian form of the equations of motion leads to non-trivial integral invariants
%of the flow. For instance, let $\gamma_1$ and $\gamma_2$ encircle the same tube of trajectories in 
%the
%extended phase space $(\mathbf x,\mathbf p,t)$. Then we have 
%\begin{equation}
%\oint_{\gamma_1}\mathbf p\cdot\mathrm d\mathbf x-\mathcal H\mathrm d t =
%\oint_{\gamma_2}\mathbf p\cdot\mathrm d\mathbf x-\mathcal H\mathrm d t. 
%\end{equation}

Furthermore, on $t=\rm{const.}$ slices, the non-canonical two-form~\eqref{omega2} reduces to
\begin{eqnarray}
\omega^{2} & = & -(1+\vert\nabla\zeta\vert^{2})b_{3}{\rm d}x\wedge{\rm d}y+(1+\zeta_{,x}^{2}){\rm d}u_{x}\wedge{\rm d}x\nonumber\\
 &  & +\zeta_{,x}\zeta_{,y}{\rm d}u_{y}\wedge{\rm d}x+\zeta_{,x}\zeta_{,y}{\rm d}u_{x}\wedge{\rm d}y+(1+\zeta_{,y}^{2}){\rm d}u_{y}\wedge{\rm d}y. \label{omega2a}
\end{eqnarray}
Note that the coefficient $b_3$ of ${\rm d}x\wedge{\rm d}y$ is half the vertical component of the physical vorticity created on the slanted infinitesimal area $dS=(1+\vert\nabla\zeta\vert^{2}){\rm d}x\wedge{\rm d}y$ of the free surface $z=\zeta(x,y,t)$ [see Eqs.~\eqref{b3} and~\eqref{W3}]. 

In Section~\ref{sec:kinematic} we have shown that vorticity vanishes at a surface crest, where $\zeta_{,x}=\zeta_{,y}=0$, when the horizontal particle velocity $\mathbf{u}_{h}$ equals the propagation speed $\mathbf{V}_c$ of the crest [see Eq.~\eqref{b41}], or equivalently when the kinematic criterion~\eqref{b4} for wave breaking holds. In this case the two-form~\eqref{omega2a} further simplifies to
\begin{equation}
\omega^2={\rm d}u_x\wedge {\rm d}x+ {\rm d}u_y\wedge {\rm d}y,
\label{om}
\end{equation}
and the associated Hamiltonian~\eqref{Hc} reduces to
\begin{equation}
\overline{\mathcal H}_c=\frac{u_x^2+u_y^2+\zeta_{,t}^2}{2}+\gr \zeta+E.\label{HH}
\end{equation}
This implies that if the kinematic criterion~\eqref{b4} is attained at the largest crest, i.e.\ when $\zeta_{,t}=0$, then the two-form~\eqref{om} and the associated Hamiltonian $\overline{\mathcal H}_c$ in~\eqref{HH} are those of a particle in free-flight, as if the surface on which the motion is constrained is non-existent and, as a result, vorticity is not created. %In physical space, trajectories of a cluster of inviscid fluid particles tend to diverge away as they pass through a crest where vorticity vanishes~\citep{FarazmandJFM2015}.
Clearly, in realistic oceanic waves the large crest eventually breaks and energy of fluid particles is dissipated to turbulence as a clear manifestation of time irreversibility. This appears analogous to a flight--crash event in fluid turbulence, where a particle flies with a large velocity before suddenly losing energy~\citep{Falkovich2014}. Clearly, the Hamiltonian particle kinematics associated with the Euler or Zakharov~(1968) equations is time-reversible~\citep{Chabchoub2014} and it may reveal the inviscid mechanism of breaking inception before turbulent dissipative effects take place. To do so, the fluid particle kinematics on the free-surface must be coupled with the dynamics of the irrotational wave field that generates the surface exploiting Zakharov's (1968) Hamiltonian formalism.

\section{Crest slowdown and wave breaking}\label{sec:slow}
\textcolor{black}{In this section, we discuss the relevance of the kinematic criterion for wave breaking~\citep{Perlin2013,Shemer2014,Shemer2015}. Recent studies point at the crest slowdown as what appears to be the underlying inviscid mechanism from which breaking onset initiates. In particular, the multifaceted study by~\cite{Banner_PRL2014} on unsteady highly nonlinear wave packets 
highlights
the existence of a generic oscillatory crest leaning mode that leads
to a systematic crest speed slowdown of approximately $20\%$ lower than the
linear phase speed at the dominant wavelength~(\cite{Fedele2014_EPL}, see also~\cite{Shemer2014}). This explains why initial
breaking wave crest speeds are observed to be approximately $80\%$ of the linear
carrier-wave speed (\cite{RappMelville,Stansell_MacFarlaneJPO2002}).}

Both the particle kinematics on the free surface and the energetics
of the wave field that generates the surface should be considered to establish if the
kinematic criterion for incipient
breaking is valid. %[see Eqs. (\ref{b2}) and (\ref{b4})]%
Recent studies show that the breaking onset of the largest crest of unsteady wave groups initiates before the horizontal particle velocity $u_x$ reaches the crest speed $V_c$, with $x$ being the direction of wave propagation. More specifically, it has been
observed that wave breaking initiates when the particle velocity reaches about $0.84$ times 
the crest velocity (\cite{Barthelemy2015,BannerSaket2015},~see also~\cite{KurniaVanGroesen2014}). 
In fact, none of the recurrent groups reach the threshold $B_x=u_x/V_c=0.84$, while all marginal breaking cases 
exceed the threshold. 

\cite{SongBannerJPO2002}, and more recently~\cite{Barthelemy2015}, explored the 
existence of an energy flux threshold related to the breaking onset. This suggests to look at the space-time transport of 
wave energy fluxes near a large crest of an unsteady wave group and possible local superharmonic instabilities that initiate 
as the threshold $B_x$ is exceeded leading to breaking, as those found for steady steep 
waves~\citep{Longuet-HigginspartI1978}. 

In the following we study the wave energy transport below a crest and the relation to the crest 
slowdown. The irrotational Eulerian velocity field $\mathbf{U}=(U,V,W)=(\phi_{,x},\phi_{,y},\phi_{,z})$
that generates the free surface $\zeta$ is given by the gradient
of the potential $\phi$. From Eq.~\eqref{uu} the velocity $\mathbf u=(u_x,u_y,u_z)$ of a fluid particle that at time $t$ 
passes through the point $\mathbf x_P$  is $\mathbf u(t)=\mathbf U(\mathbf x_P,t)$. 
Besides the Laplace equation to impose
fluid incompressibility in the flow domain, $\phi$ satisfies the
dynamic Bernoulli and kinematic conditions on the free surface (see,
e.g., \cite{Zakharov1968,Zakharov1999}) 
\begin{equation}
\rho\phi_{,t}+\rho \gr\zeta+K_{e}=0,\qquad z=\zeta,\label{B}
\end{equation}
and 
\begin{equation}
\phi_{,z}=\zeta_{,t}+U\zeta_{,x}+V\zeta_{,y},\qquad z=\zeta,\label{B1a}
\end{equation}
where $K_{e}=\rho\mathbf{\left|U\right|}^{2}/2$ is the kinetic energy
density. Drawing on \cite{Tulin2007}, consider the transport equation
\begin{equation}
\partial_{t}K_{e}+\nabla\cdot\mathbf{F}_{K_{e}}=0\label{Ke}
\end{equation}
and the associated flux 
\begin{equation}
\mathbf{F}_{K_{e}}=-\rho\phi_{,t}\mathbf{U}.\label{Flux}
\end{equation}

Equation~(\ref{Ke}) can be written as
\begin{equation}
\partial_{t}K_{e}+\nabla\cdot\left(\mathbf{C}_{K_{e}}K_{e}\right)=0,\label{Ke-1}
\end{equation}
where we have defined the Eulerian kinetic energy flux velocity 
\begin{equation}
\mathbf{C}_{K_{e}}=\frac{\mathbf{F}_{K_{e}}}{K_{e}}=-\frac{\rho\phi_{,t}}{K_{e}}\mathbf{U}.\label{cflux}
\end{equation}
\begin{comment}
The Lagrangian kinetic energy flux seen by a fluid particle is
\begin{equation}
\mathbf{F}_{K_{e},L}=-\left(\rho\phi_{,t}+K_{e}\right)\mathbf{U}\label{Flux-1}
\end{equation}
and the associated Lagrangian speed
\[
\mathbf{C}_{K_{e},L}=-\left(1+\frac{\rho\phi_{,t}}{K_{e}}\right)\mathbf{U}.
\]
\end{comment}
At the free-surface, the kinetic energy flux in Eq.~(\ref{Flux}) can be written as
\begin{equation}
\mathbf{F}_{K_{e}}=\mathbf{U}\left(\rho g\zeta+K_{e}\right),\qquad z=\zeta,\label{Flux-1-1}
\end{equation}
where we have used the Bernoulli equation (\ref{B}). Then, the rate of
change of the surface potential energy density $P_{e}=\rho g\zeta^{2}/2$ \citep{Tulin2007} 
\begin{equation}
\partial_{t}P_{e}=\mathbf{F}_{K_{e}}\cdot\mathbf{n}/\cos\theta\label{Pe}
\end{equation}
is due to the flux of kinetic energy into the moving interface $\zeta$
\begin{equation}
\mathbf{F}_{K_{e}}\cdot\mathbf{n}=U_{n}\left(\rho g\zeta+K_{e}\right)\qquad z=\zeta,\label{Peflux}
\end{equation}
where $U_{n}=\mathbf{U}\cdot\mathbf{n}$ is the fluid velocity normal to the surface and $\theta$
the angle between $\mathbf{n}$ and the vertical (at a wave crest, $\theta=0$). The sum of the total
kinetic energy $K_{e}$ integrated over the wave domain and the potential
energy $P_{e}$ integrated over the surface is conserved. Clearly, a wave crest grows when the adjacent kinetic energy flux behind the crest is larger than the flux after the crest. %Further, the slopes
%of the backward and forward faces of the crest influence the am

For unidirectional waves $\zeta(x,t)$, the kinematic condition (\ref{B1a}) reduces
to 
\[
\zeta_{,t}=W-U\zeta_{,x},
\]
and
\[
\zeta_{,xt}=\partial_x W+\partial_z W\zeta_{,x}-\partial_z U\zeta_{,x}^{2}-U\zeta_{,xx}.
\]
Then, at $\zeta_{,x}=0$ the crest speed in Eq.~\eqref{Vc1} can be written
as
\begin{equation}
V_{c}=-\frac{\zeta_{,xt}}{\zeta_{,xx}}=U-\frac{\partial_x W}{\zeta_{,xx}}=U-\frac{\partial_z U}{\zeta_{,xx}},\label{VCC}
\end{equation}
where $\partial_x W=\partial_z U$ because of irrotationality. At a crest $\zeta_{,xx}<0$
and the vertical gradient $\partial_z U>0$ as indicated by measurements
and simulations~\citep{Barthelemy_slowdown2015,Barthelemy2015}. As a result, for smooth wave fields the crest speed
$V_{c}$ is always larger than the horizontal fluid velocity $U$.
According to Eq.~(\ref{VCC}), only when crest becomes steep ($\left|\zeta_{,xx}\right|\gg1$)
or the horizontal velocity profile flattens near the crest ($\partial_z U\ll1$)
is the crest speed $V_{c}$ closer to the particle speed $u_x=U$. Thus, the
observation that the initiation of breaking occurs when $V_{c}$ is actually $0.84$ times the particle speed is the kinematic manifestation
of the space-time transport of kinetic energy below the crest~\citep{Barthelemy2015,BannerSaket2015}. Indeed,
from Eq.~(\ref{Peflux}) the normal velocity $C_{K_{e}}$ of kinetic
energy into the moving surface is given by 
\begin{equation}
C_{K_{e}}=\frac{\mathbf{F}_{K_{e}}\cdot\mathbf{n}}{K_{e}}=U_{n}\left(1+\frac{\rho \gr\zeta}{K_{e}}\right).\label{Peflux-1}
\end{equation}
At a crest, where $\zeta>0$, $C_{K_{e}}$ is always larger than the fluid speed $U_{n}$
normal to the surface. However, we expect that as the wave crest grows
reaching nearly breaking the local kinetic energy $K_{e}$ increases
much faster than the potential energy $\rho \gr\zeta$ and $C_{K_{e}}$
tends to $U_{n}$ and the accumulation of potential energy
into the surface is largerly attenuated. Equivalently, the Lagrangian kinetic energy flux
speed $C_{K_{e}}-U_{n}$ seen by fluid particles on the surface is practically null.

\section{There are no finite-time blowups}\label{sec:blowup}
In the appendix of~\cite{sclav05}, contributed by Bridges, the possibility of finite-time blowup
of solutions of the JS equations is discussed.
Bridges studies the special case of the particle kinematics on a 1D surface, i.e., when $\zeta_{,y}\equiv 0$. The equations of motion in Eqs.~\eqref{JS} then reduce to 
\begin{subequations}
\begin{equation*}
\dot x=u_x,\quad \dot y=u_y,
\end{equation*}
\begin{equation*}
\dot u_x=-\frac{2\zeta_{,x}\zeta_{,xt}}{1+\zeta_{,x}^2}u_x-\frac{\zeta_{,x}\zeta_{,xx}}{1+\zeta_{,x}^2}u_x^2
-\frac{\zeta_{,x}(\zeta_{,tt}+\gr)}{1+\zeta_{,x}^2},
\quad
\dot u_y=0.
\end{equation*}
\label{eq:bridges}
\end{subequations}
It is then argued that under the further simplifying assumption that the matrix
\begin{equation*}
\frac{\zeta_{,x}}{1+\zeta_{,x}^2}
\begin{pmatrix}
\zeta_{,xx} & \zeta_{,xt}\\
\zeta_{,xt} & \zeta_{,tt}+\gr\\
\end{pmatrix},
\label{eq:briddges_assume}
\end{equation*}
is constant along trajectories $x(t)$, the horizontal velocity $u_x$ is likely to grow unbounded in finite time.
%This latter assumption in particular implies that the first element of the above matrix is constant. Denoting this constant by $K_0$, we have
%\begin{equation*}
%\frac{\zeta_{,x}\zeta_{,xx}}{1+\zeta_{,x}^2}\Big|_{(x(t),t)}=\frac{1}{2}\frac{\partial\,}{\partial x}\ln\left(1+\zeta_{,x}^2 \right)\Big|_{(x(t),t)}=K_0.
%\end{equation*}
%Similarly, the off-diagonal elements of the matrix being constant along trajectories implies
%\begin{equation*}
%\frac{\zeta_{,x}\zeta_{,xt}}{1+\zeta_{,x}^2}\Big|_{(x(t),t)}=\frac{1}{2}\frac{\partial\,}{\partial t}\ln\left(1+\zeta_{,x}^2 \right)\Big|_{(x(t),t)}=K_1,
%\end{equation*}
%for some constant $K_1$.

These assumptions are highly specific and unrealistic.
Nevertheless, Bridges' observation raises the fundamental question
of whether the JS equations are well-posed.
In fact, the right-hand-side of the JS equations (cf. Eq.~\eqref{JS}) is not Lipschitz continuous due to the presence of the quadratic terms
in $\dot{x}$ and $\dot{y}$. Therefore, the elementary results from
ODE theory (i.e., Picard's existence and uniqueness theorem) do not
rule out the finite-time blowup scenario. \textcolor{black}{Note that even though the
free surface $\zeta$ is bounded, the particle velocities obtained from the JS equations~\eqref{JS}
could in principle have a singular behavior.}

Our Hamiltonian formulation for the 3-D particle kinematics
shows that for smooth steady surfaces (i.e., when $\zeta=\zeta(x,y)$
has bounded partial derivatives), the finite-time blowup never occurs.
As we show in Appendix~\ref{app:u2}, the mere conservation of a Hamiltonian
function does not generally rule out the finite-time blowup. 
However, the particular form of the Hamiltonian function~\eqref{HV}
leads to a finite bound on particle speed. 

To see this, note that the Hamiltonian $\mathcal{H}_c=\mathcal{H}_c(\mathbf{x},{\mathbf{p}})$
derived in Eq.~(\ref{HV}) is conserved along the trajectories 
$(\mathbf{x}(t),\mathbf{p}(t))$.
More precisely, 
\begin{equation}
\mathcal{H}_c(\mathbf{x}(t),{\mathbf{p}}(t))=
\mathcal{H}_c(\mathbf{x}(0),{\mathbf{p}}(0))=\mathcal{H}_{0}<\infty,
\label{eq:dHdt=0}
\end{equation}
for all $t$ and finite initial data $(\mathbf x(0),{\mathbf p}(0))$.

On the other hand,
\begin{equation}
\mathcal H_c(\mathbf x,{\mathbf p})=
\frac{1}{2}{\mathbf p}\cdot \mathbf B^{-1}{\mathbf p}+\gr \zeta(\mathbf x)
\geq \frac{|{\mathbf p}|^2}{2(1+|\nabla\zeta(\mathbf x)|^2)}+\gr \zeta(\mathbf x),
\label{eq:Ham_ineq}
\end{equation}
where the inequality follows from the fact that $\mathbf B^{-1}$ is symmetric, positive-definite
with the smallest eigenvalue equal to $(1+|\nabla\zeta|^2)^{-1}$.

Now assume that there exists a finite time $t_{0}$
such that $\lim_{t\to t_{0}}|{\mathbf{p}}(t)|=\infty$, i.e., there
is a blowup at time $t_{0}$. Since $\zeta$ and $|\nabla\zeta|$ are bounded,
inequality~\eqref{eq:Ham_ineq} implies that $\lim_{t\to 
t_{0}}\mathcal{H}_c(\mathbf{x}(t),{\mathbf{p}}(t))=\infty$.
This, however, contradicts the conservation law~\eqref{eq:dHdt=0}.

By definition of the canonical momentum \eqref{pmon1}, 
we have ${\mathbf p}=\mathbf B{\mathbf u}_h$. This in turn implies
$$|{\mathbf p}|^2={\mathbf u}_h\cdot \mathbf B^2{\mathbf u}_h\geq |{\mathbf u}_h|^2,$$
where the inequality follows from the fact that $\mathbf B$ is positive definite
with the smallest eigenvalue equal to $1$. Since $|{\mathbf p}|$ is 
bounded, so is $|{\mathbf u}_h|$, ruling out the finite-time blowup for the particle velocity.
In summary, in the autonomous case (where the smooth surface $\zeta$
is time-independent) the equations of motion~\eqref{JS} are well-posed and finite-time blowup 
cannot occur. 

For traveling waves, 
i.e., $\zeta(x,y,t)=\overline\zeta(x-ct,y)$, one can also show that there are no finite-time blowups. The 
proof 
is similar to the steady case, except that for the traveling waves the conserved Hamiltonian is 
given by Eq.~\eqref{Htw3}.
Namely, in the co-moving frame $\overline{\mathbf x}=(x-ct,y)$, we have
\begin{equation*}
\mathcal H_c(\overline{\mathbf  x}(t),{\mathbf p}(t))\geq 
\frac{|{\mathbf p}-\pmb{\alpha}-c\mathbf B\mathbf e_1|^2}{2(1+|\nabla\zeta(\overline{\mathbf 
x})|^2)}+
\gr\overline\zeta(\overline{\mathbf x})-\frac{1}{2}c^2.
\end{equation*}
As in the steady case, blowup of ${\mathbf p}$ violates the conservation of the Hamiltonian 
function.

For the general non-autonomous case, where $\zeta$ is time-dependent, the finite-time blowup
may not be ruled out by the above argument. 
%We point out that in our numerical simulations of the JS equations
%with unsteady surfaces, we did not observe any trace of such singular behavior~\citep{FarazmandJFM2015}, even though it cannot be ruled out using the above arguments.  

\section{Trapping regions for steady flows and traveling waves}\label{sec:trapping}
As mentioned earlier, the JS equations are very general as they describe the friction-less motion of a particle on a given surface. Using the Hamiltonian structure in Eq.~\eqref{HV}, we show that 
the horizontal motion of a particle on a steady surface (i.e., $\zeta=\zeta(x,y)$) or on a traveling wave
(i.e., $\zeta=\overline{\zeta}(x-ct,y)$)
is always trapped in a subset of the
two-dimensional  $x-y$ plane. Since the Hamiltonian is conserved, the phase space 
$(x,y,u_x,u_y)\in \mathbb R^4$ 
is foliated by the invariant hypersurfaces $\mathcal{H}=\mbox{const.}$
These hypersurfaces are three-dimensional, and therefore, the particle trajectories can be chaotic. 
It turns out that one can deduce more from the Hamiltonian structure. Namely, we show that, based on their initial
conditions, the trajectories are confined to a subset of the configuration space $(x,y)$. 

We first consider the steady case $\zeta_{,t}=0$, where the Hamiltonian~\eqref{Hc} can be written as
\begin{equation}
\mathcal H(\mathbf x,\mathbf u)=\gr \zeta(\mathbf x)+\frac{1}{2}|\mathbf u|^2+\frac{1}{2}|\mathbf u\cdot \nabla\zeta (\mathbf x)|^2.
\label{eq:simple_Ham}
\end{equation}
Note that the energy $E$ is omitted since the system is autonomous. In this steady case, the following result holds.

\begin{theorem}
Consider the motion of a particle constrained to the smooth steady surface 
$\zeta=\zeta(\mathbf x)$. 
Denote the initial condition of the particle by $(\mathbf x_0,\mathbf u_0)$ and define 
\begin{equation}
D_0:=\left\{\mathbf x=(x,y)\in\mathbb R^2| \zeta(\mathbf x)\leq \zeta(\mathbf x_0)+
\frac{1}{2\gr}|\mathbf u_0|^2+\frac{1}{2\gr}|\mathbf u_0\cdot \nabla\zeta (\mathbf x_0)|^2\right\}.
\label{D0}
\end{equation}
The position of the particle is bound to the subset $D_0$, i.e., $(x(t),y(t))\in D_0$ for all
times $t$.
\label{thm:trapping}
\end{theorem}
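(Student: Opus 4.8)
The plan is to read off the confinement directly from the conservation of the Hamiltonian~\eqref{eq:simple_Ham}. Since the surface is time-independent the system is autonomous, so $\mathcal H(\mathbf x(t),\mathbf u(t))$ is constant along any solution; evaluating it at the initial data gives $\mathcal H(\mathbf x(t),\mathbf u(t))=\gr\,\zeta(\mathbf x_0)+\tfrac12|\mathbf u_0|^2+\tfrac12|\mathbf u_0\cdot\nabla\zeta(\mathbf x_0)|^2$ for every $t$ in the interval of existence of the trajectory.

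The second ingredient is that the kinetic part of $\mathcal H$ is manifestly non-negative: both $\tfrac12|\mathbf u|^2$ and $\tfrac12|\mathbf u\cdot\nabla\zeta(\mathbf x)|^2$ are squares, so $\mathcal H(\mathbf x,\mathbf u)\geq\gr\,\zeta(\mathbf x)$ at every point of phase space. (Equivalently, the kinetic term is $\tfrac12\mathbf u\cdot\mathbf B\mathbf u$ with $\mathbf B$ positive-definite, as noted after Eq.~\eqref{Bmat}.) Combining the two facts,
\begin{equation*}
\gr\,\zeta(\mathbf x(t))\leq\mathcal H(\mathbf x(t),\mathbf u(t))=\gr\,\zeta(\mathbf x_0)+\tfrac12|\mathbf u_0|^2+\tfrac12|\mathbf u_0\cdot\nabla\zeta(\mathbf x_0)|^2,
\end{equation*}
and dividing by $\gr>0$ yields $\zeta(\mathbf x(t))\leq\zeta(\mathbf x_0)+\tfrac1{2\gr}|\mathbf u_0|^2+\tfrac1{2\gr}|\mathbf u_0\cdot\nabla\zeta(\mathbf x_0)|^2$, which is exactly the defining inequality of $D_0$. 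Hence $(x(t),y(t))\in D_0$ for all $t$.

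I do not anticipate a genuine obstacle here; the statement is essentially a one-line consequence of energy conservation. The only point deserving a remark is that the estimate is a priori valid only on the maximal existence interval of the trajectory — but this is precisely the setting already treated in Section~\ref{sec:blowup}, where the same conserved quantity is shown to bound $|\mathbf u|$ and thereby exclude finite-time blowup, so the solution is in fact global and the trapping holds for all $t\in\mathbb R$. The identical reasoning, applied to the co-moving Hamiltonian~\eqref{Htw3}, gives the corresponding trapping region for traveling waves.
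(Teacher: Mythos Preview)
Your proof is correct and follows essentially the same approach as the paper: both use conservation of the Hamiltonian~\eqref{eq:simple_Ham} together with the non-negativity of its kinetic part to obtain $\gr\,\zeta(\mathbf x(t))\leq\mathcal H(\mathbf x_0,\mathbf u_0)$. Your write-up is slightly more detailed (explicitly displaying the squared terms and noting the link to the blowup discussion), but the argument is the same one-line energy estimate.
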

\begin{proof}
Hamiltonian~\eqref{eq:simple_Ham} is conserved along particle trajectories $(\mathbf x(t),\mathbf u(t))$. 
Hence we have
$$\gr \zeta(\mathbf x(t))\leq \mathcal H(\mathbf x(t),\mathbf u(t))=\mathcal H(\mathbf x_0,\mathbf u_0).$$
\end{proof}

Note that the above theorem does not imply that the subset $D_0$ is invariant. In fact, particles 
initiated outside $D_0$ 
can very well enter (and exit) the set. Instead, the set $D_0$ is a \emph{trapping region}, i.e., 
particles starting in $D_0$ with initial conditions $(\mathbf x_0,\mathbf u_0)$ stay
in $D_0$ for all times. For a given surface, the trapping region $D_0$ is entirely determined by 
the initial position $\mathbf x_0$ and the initial velocity $\mathbf u_0$ of the particle.

An interesting special case is to consider the motion of the particle from rest, i.e., zero initial velocity.
Then Theorem~\ref{thm:trapping} implies the following.

\begin{corollary}
Consider the motion of a particle that is initially at rest and moves on a smooth steady surface 
$\zeta=\zeta(x,y)$.
Denote the initial position of the particle by $(x_0,y_0)$ and define 
\begin{equation*}
D_0:=\left\{(x,y)\in\mathbb R^2| \zeta(x,y)\leq \zeta(x_0,y_0) \right\}.
\label{D0_0}
\end{equation*}
The position of the particle is bound to the subset $D_0$, i.e. $(x(t),y(t))\in D_0$ for all
times $t$.
\label{cor:trapping}
\end{corollary}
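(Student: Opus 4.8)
The plan is to obtain the corollary as the immediate specialization of Theorem~\ref{thm:trapping} to the case of vanishing initial velocity. First I would note that the hypotheses are identical to those of the theorem --- a particle constrained to a smooth steady surface $\zeta=\zeta(x,y)$ --- with the single extra assumption $\mathbf u_0=\mathbf 0$. Substituting $\mathbf u_0=\mathbf 0$ into the definition~\eqref{D0} of the trapping region, both correction terms $\frac{1}{2\gr}|\mathbf u_0|^2$ and $\frac{1}{2\gr}|\mathbf u_0\cdot\nabla\zeta(\mathbf x_0)|^2$ vanish, so~\eqref{D0} collapses to
\[
D_0=\{(x,y)\in\mathbb{R}^2\mid \zeta(x,y)\le\zeta(x_0,y_0)\},
\]
which is exactly the set appearing in the corollary. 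Theorem~\ref{thm:trapping} then yields $(x(t),y(t))\in D_0$ for all $t$, which is the assertion.

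Equivalently, one can give the one-line direct argument: the Hamiltonian~\eqref{eq:simple_Ham} is conserved, and at the initial instant $\mathcal H(\mathbf x_0,\mathbf u_0)=\gr\zeta(\mathbf x_0)$ because $\mathbf u_0=\mathbf 0$; since each term in~\eqref{eq:simple_Ham} is nonnegative, $\gr\zeta(\mathbf x(t))\le\mathcal H(\mathbf x(t),\mathbf u(t))=\gr\zeta(\mathbf x_0)$ for all $t$, and dividing by $\gr>0$ confines the particle to the sublevel set $\{\zeta\le\zeta(x_0,y_0)\}$. Physically, a particle released from rest on a steady surface can never climb above its starting height. There is no genuine obstacle here: the result is a direct corollary, and the only point meriting a sentence of justification is the observation that the two velocity-dependent terms in~\eqref{D0} drop out when $\mathbf u_0=\mathbf 0$, leaving the pure sublevel-set description of $D_0$.
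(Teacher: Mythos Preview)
Your proposal is correct and matches the paper's own proof, which simply states that the corollary is a direct consequence of Theorem~\ref{thm:trapping} with $\mathbf u_0=\mathbf 0$. Your additional one-line direct argument via conservation of~\eqref{eq:simple_Ham} is also valid and just unpacks the theorem's proof in this special case.
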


\begin{proof}
This is a direct consequence of Theorem~\ref{thm:trapping} with the initial velocity ${\mathbf u}_0=\mathbf 0$.
\end{proof}

Theorem~\ref{thm:trapping} and Corollary~\ref{cor:trapping} hold for traveling waves, $\zeta(x,y,t)=\overline{\zeta}(x-ct,y)$.
The statements are identical except that the coordinate $x$ and the velocity $u_x$ are replaced with the co-moving
coordinate $\bar x = x-ct$ and velocity ${\bar u}_x=\dot x-c$, respectively. The proofs are similar and therefore omitted here. The trapping region in Eq.~\eqref{D0} is now given by
\begin{equation*}
D_0:=\left\{\mathbf x=(\bar x,y)\in\mathbb R^2| \overline{\zeta}(\mathbf x)\leq \overline{\zeta}(\mathbf x_0)+
\frac{1}{2\gr}|\mathbf u_0-c\mathbf e_1|^2+\frac{1}{2\gr}|(\mathbf u_0-c\mathbf e_1)\cdot \nabla \overline{\zeta} (\mathbf x_0)|^2 \right\},
\label{D01}
\end{equation*}
where $\mathbf e_1$ is the unit vector along $\bar x$ and the initial particle velocity $\mathbf u_0$ is that in the fixed reference frame.
%Clearly, for fluid particles on a free-stress ocean surface $\mathbf u_0$ is not arbitrary and the trapping region $D_0$ can extend along the main direction of propagation because of the Stokes drift. Indeed, for unidirectional linear traveling waves in deep water $\overline{\zeta}(\bar x)=a\cos(k_0 \bar x)$ and on the surface $u_0(\bar x_0)\sim a\omega_0\cos(k_0 \bar x_0)$, where $a$ is the wave amplitude, $k_0$ the wavenumber and the associated phase speed $c=\omega_0/k_0$, where the frequency $\omega_0$ follows from the linear dispersion relation $k_0=\omega_0^2/\gr$. Then, the trapping region simplifies to
%\begin{equation*}
%D_0:=\left\{\bar x\in\mathbb R| \overline{\zeta}(\bar x)\leq \overline{\zeta}(\bar x_0)+
%\frac{1}{2\gr}|u_0-c|^2(1+\partial_{\bar x}\overline{\zeta}(\bar x_0)^2)\right\}.
%\label{D03}
%\end{equation*}
%
\section{Concluding remarks}
We have investigated the properties of the JS equations for the kinematics of fluid particles on the sea surface. We showed that the JS equations can be derived from an action principle describing the motion of a frictionless particle constrained on an unsteady surface and subject to gravity. 
Further, for a zero-stress free surface the classical kinematic criterion for wave breaking is deduced from the condition of vanishing of vorticity generated at a crest. If this holds for the largest crest, the Hamiltonian structure of the JS equations reveals that the associated symplectic two-form instantaneously reduces to that of the motion of a particle in free flight, as if the constraint to be on the free surface did not exist. 

In realistic oceanic fields the large crest eventually breaks and energy of fluid particles is dissipated to turbulence, which is a time-irreversible mechanism. We speculate that this behavior appears analogous to a flight--crash event in fluid turbulence, where a particle flies with a large velocity before suddenly losing energy~\citep{Falkovich2014}. Clearly, the Hamiltonian particle dynamics associated with the inviscid Euler or Zakharov~(1968) equations is time-reversible~\citep{Chabchoub2014}. Then, the instantenous vanishing of vorticity at large crests may reveal the inviscid 
mechanism of breaking inception before turbulent dissipative effects take place. This necessitates a further study of the dynamics and energetics of the wave field that generates the free surface to verify if the kinematic breaking criterion is valid.

Finally, the conservation and special form of the Hamiltonian function for steady surfaces and traveling waves implies that particle velocities remain bounded at all times, ruling out the finite-time blowup of solutions.

\section*{Acknowledgments}
FF acknowledges the Georgia Tech graduate courses `Classical Mechanics II' taught by Jean Bellissard in Spring 2013 and `Nonlinear dynamics: Chaos, and what to do about it?' taught by Predrag Cvitanovi\'c in Spring 2012. FF also thanks Jean Bellissard for stimulating discussions on differential geometry and classical mechanics as well as for a revision of an early draft of the manuscript. The authors are also grateful to Jean Bellissard, Predrag Cvitanovi\'c and Rafael De La Llave  for stimulating discussions on symplectic geometry and Hamiltonian dynamics.

\begin{appendices}

\section{JS equations for steady irrotational flows}\label{app:Stokes}
Consider a one-dimensional, semi-infinite, steady, irrotational flow constrained to the wave surface $\zeta=\zeta(x)$. 
These assumptions imply $\phi_{,y}=\phi_{,t}=0$ (where $\phi$ is the velocity potential) and $\zeta_{,y}=\zeta_{,t}=0$. Since the vertical particle velocity satisfies
$$\dot z=\phi_{,z}(x(t),z(t)),$$
the respective acceleration is given by
$$\ddot z = \phi_{,xz}\dot x + \phi_{,zz}\dot z.$$

For particles on the surface, $z(t)=\zeta(x(t))$, which implies 
$$\dot z = \zeta_{,x} \dot x,$$
and
$$\ddot z = \zeta _{,xx} \dot x^2+\zeta_{,x} \ddot x.$$
Therefore,
$$\zeta _{,xx} \dot x^2+\zeta_{,x} \ddot x=\phi_{,xz}\dot x + \phi_{,zz}\zeta_{,x}\dot x,$$
which upon multiplying by $\zeta_{,x}$ and rearranging terms gives
\begin{equation}
\zeta_{,x}^2 \ddot x=-\zeta _{,xx}\zeta_{,x} \dot x^2+\phi_{,xz}\zeta_{,x}\dot x + \phi_{,zz}\zeta_{,x}^2\dot x.\label{EQ1}
\end{equation}

On the other hand, the Bernoulli equation~\eqref{B} reads
$$\gr\zeta(x(t))+\frac{1}{2}(\dot x^{2}+\phi_{,z}^2(x(t),\zeta(x(t)))=0.$$
Taking the derivative with respect to time we obtain
$$\ddot x=-\gr\zeta_{,x}-\phi_{,z}\big(\phi_{,xz}+\phi_{,zz}\zeta_{,x}\big).$$
Using $\phi_{,z}=\dot z=\zeta_{,x}\dot x$ implies
\begin{equation}
\ddot x=-\gr\zeta_{,x}-\phi_{,xz}\zeta_{,x}\dot x-\phi_{,zz}\zeta_{,x}^2\dot x.
\label{EQ2}
\end{equation}
Adding Eqs.~\eqref{EQ1}~and~\eqref{EQ2} gives the JS equations~\eqref{JS} in the case of 1-D steady flows. 

%\section{Reparametrization of time}\label{app:repar_time}
%We consider that $\dot{t}$ is no longer equal to one, but to its canonically conjugate momentum $u_t$. The Hamiltonian of a free particle is now
%$$
%\overline{\mathcal{H}}=\frac{u_x^2+u_y^2+u_z^2+u_t^2}{2}+g z.
%$$
%Constraining the particle to remain on the surface $z=\zeta(x,y,t)$ leads to two constraints 
%$$
%\Phi_1=z-\zeta(x,y,t),\quad \Phi_2=u_z-u_x\zeta_x-u_y \zeta_y-u_t \zeta_t.
%$$
%The associated Poisson matrix $\overline{\mathbb J}_*$ is now 
%\begin{equation}
%\label{eq:DBzeta1}
%\overline{\mathbb J}_*=\left( \begin{array}{cc} 0 & \overline{\bf B}_{\rm s}^{-1}\\ -(\overline{\bf B}_{\rm s}^{-1})^T & \overline{\cal B}  \end{array} \right), 
%\end{equation}
%where 
%$$
%\overline{\bf B}_{\rm s}= \left( \begin{array}{ccc} 1+\zeta_x^2 & \zeta_x \zeta_y & \zeta_x\zeta_t \\ \zeta_x\zeta_y  & 1+\zeta_y^2  & \zeta_y\zeta_t \\ 
%\zeta_x\zeta_t & \zeta_y\zeta_t  &  1+\zeta_t^2 \end{array} \right),
%$$
%and
%$$
%\overline{\cal B}_s=\left( \begin{array}{ccc} 0 & -b_3 & b_2 \\ b_3  & 0  & -b_1 \\ 
%-b_2 & b_1 &  0 \end{array} \right),
%$$
%where the vector ${\bf b}_{\rm m}=(b_1,b_2,b_3)$ is the same as in Eq.~\eqref{bb}, but with $\overline{u}=(u_x,u_y,u_t)$. In this way, $x$, $y$ and $t$ play similar roles. We notice that the equations of motion satisfy
%$$
%\frac{\dot{u_t}}{\zeta_t}=\frac{\dot{u_x}}{\zeta_x}=\frac{\dot{u_y}}{\zeta_y}.
%$$

\section{Computation of the Dirac bracket} \label{app:Dirac}

Since the surface is time-dependent, the resulting constraints have an explicit time-dependence. We first autonomize the system of the free particle in three dimensions adding a pair of canonically conjugate variables $(t,E)$, where $E$ is the energy exchanged by the particle with the moving surface. Indeed, the particle behaves as an open system if the motion is on unsteady surfaces. Constraints are now functions of the dynamical variables
\[
\overline{\bf z}=(x,y,z,t,u_x,u_y,u_z,E),
\]
as required by Dirac's theory, and $\overline{\bf z}(\tau)$ is a generic trajectory in the extended phase space, parametrized by $\tau$ which plays the role of time for the autonomous system.
The autonomized Hamiltonian of the free particle in three dimensions subjected to gravity is
\begin{equation*}
\overline{\mathcal{H}}=\frac{u_x^2+u_y^2+u_z^2}{2}+\gr z +E.\label{Hfree} 
\end{equation*}
The two constraints are given by
\begin{equation*}
\Phi_1=z-\zeta(x,y,t)=0,\quad\quad \Phi_2=u_z-u_x\zeta_{,x}-u_y \zeta_{,y}-\zeta_{,t}=0.
\end{equation*}
The $2\times 2$ matrix ${\mathbb D}$ follows from the inverse of the symplectic matrix ${\mathbb C}$ given by Eq.~(\ref{eq4C}):
$$
D_{11}=D_{22}=0,\quad\quad D_{21}=-D_{12}=1/(1+\zeta_{,x}^2+\zeta_{,y}^2). 
$$
The Poisson matrix associated with the Dirac bracket is computed from 
\begin{equation}
\label{eq:Jstar}
{\mathbb J}_*={\mathbb J}-{\mathbb J}\hat{\cal Q}^\dagger{\mathbb D}\hat{\cal Q}{\mathbb J},
\end{equation}
where the $K\times N$ matrix $\hat{\cal Q}$ has elements
$$
\hat{\cal Q}_{\alpha l}=\frac{\partial \Phi_\alpha}{\partial z_l},
$$ 
and $\dagger$ denotes Hermitian transposition. Since ${\mathbb C}=\hat{\cal Q}{\mathbb J}\hat{\cal Q}^\dagger$, the Poisson matrix of the Dirac bracket can be computed algebraically by way of a projector~\citep{chan13a}
$$
{\cal P}_*={\mathbb I}_N-\hat{\cal Q}^\dagger{\mathbb D}\hat{\cal Q}{\mathbb J},
$$
where ${\mathbb I}_N$ is the $N\times N$ identity matrix. If ${\mathbb C}$ is invertible, ${\cal P}_*\hat{\cal Q}^\dagger=0$, which is an alternative way to characterize the fact that constraints that are actually Casimir invariants of the Dirac bracket. Actually, the matrix ${\mathbb D}$ is defined such that the constraints are Casimir invariants of the Dirac bracket, i.e., $\{F,\Phi_{\alpha}\}_*=0$ for all observables $F$. As a result, the Dirac bracket is a Poisson bracket that satisfies the Jacobi identity~\citep{chan13b}.
The Dirac projector ${\cal P}_*$ projects the dynamics onto the surface defined by the constraints. The expression of ${\mathbb J}_*$ is given by ${\mathbb J}_*={\mathbb J}{\cal P}_*={\cal P}_*^\dagger {\mathbb J}{\cal P}_*$. This provides a systematic and algebraic procedure to compute Dirac brackets. 

The resulting Poisson matrix does not explicitly depend on $z$ and $u_z$. As a consequence, the Poisson bracket of two functions of $(x,y,t,u_x,u_y,E)$ is again a function of $(x,y,t,u_x,u_y,E)$. In other words, the algebra of observables $F(x,y,t,u_x,u_y,E)$ is a Poisson sub-algebra. In this way, one can omit $z$ and $u_z$ (since their dynamics is quite trivially given by the constraints which are Casimir invariants of the Dirac bracket) and the phase-space dimension is reduced by two. This leads to the expression of the Dirac bracket given by Eq.~(\ref{eq:DBzeta}).

\section{The Hamiltonian structure of a prototype blowup problem}\label{app:u2}
As a toy problem, Bridges considers the simplest second order ODE Riccatti equation, which can be written as
\begin{equation}
\dot{x}=u,\quad\quad\dot{u}=u^2.
\label{eq:ftbu}
\end{equation}
Although the JS equations cannot be reduced to this form, we discuss its properties for completeness. The system~\eqref{eq:ftbu} possesses the Hamiltonian
\[
H=u{\rm e}^{-x},
\]
which is of course an invariant of the dynamics. The non-canonical Poisson bracket is given by
$$
\{F,G\}=u {\rm e}^x \left( \frac{\partial F}{\partial x}\frac{\partial G}{\partial 
u}-\frac{\partial F}{\partial u}\frac{\partial G}{\partial x} \right).
$$  
The canonical structure of the system is obtained in the variables $(x,{\rm e}^x \ln u)$. 

For initial conditions $\left(x_{0},u_{0}\right)$ at $t=0$
\[
x(t)=x_{0}+\ln\frac{1}{1-u_{0}t},\qquad u(t)=\frac{u_{0}}{1\text{\textminus}u_{0}t}.
\]
Clearly, for positive initial velocities ($u_{0}>0$), all solutions
blow up in finite time, with the time of blowup inversely proportional
to the norm of the initial velocity data. On the other hand, trajectories
are bounded for negative initial velocities and they exist for all
time. 

The finite-time singularity of the system can be explained exploiting the time invariance of the Hamiltonian $H=u{\rm e}^{-x}$. As $u$ linearly tends to infinity when $t$ tends to some $t_0$, $x$ 
also tends to infinity, but logarithmically, when $t$ goes to $t_0$, in such a way that the product 
between $u$ and ${\rm e}^{-x}$. This is possible because  ${\rm e}^{-x}$ is not bounded from below 
by a strictly positive quantity. Contrast this with Eq.~\eqref{eq:Ham_ineq}, where the quadratic part of the Hamiltonian is positive definite, and hence bounded from below by a positive constant.

%\section{Proof of Lemma~\ref{lem:simple_Ham}} \label{app:simple_Ham}
%Using~\eqref{eq:ham_aut} and the fact that $\tilde{\mathbf p}=\mathbf B\mathbf p$, we have
%\begin{equation}
%\mathcal H = \frac{1}{2}\mathbf p\cdot \mathbf B\mathbf p+\gr \zeta.
%\label{eq:ham_vel}
%\end{equation}
%At points $\mathbf x$ where $\bnabla\zeta(\mathbf x)=0$, the matrix $\mathbf B$ is identity and~\eqref{eq:simple_Ham} is trivial.
%Assume now that $\bnabla\zeta(\mathbf x)\neq 0$. We write $\mathbf p$ in the eigen basis of $\mathbf B$,
%$$\mathbf p = \alpha \bnabla\zeta(\mathbf x)+ \beta \bnabla^\perp\zeta(\mathbf x),$$
%and substitute it in~\eqref{eq:ham_vel} to get
%\begin{align}
%\mathcal H &= \frac{1}{2}\left[\alpha^2(1+|\bnabla\zeta(\mathbf x)|^2)|\bnabla\zeta(\mathbf x)|^2+
%\beta^2|\bnabla^\perp\zeta(\mathbf x)|^2
%\right] +\gr \zeta(\mathbf x)\nonumber\\
%& = \frac{1}{2}\left[(\alpha^2+\beta^2)|\bnabla\zeta(\mathbf x)|^2+\alpha^2|\bnabla\zeta(\mathbf x)|^4\right]+\gr \zeta(\mathbf x),
%\end{align}
%where we used the fact that $|\bnabla\zeta|=|\bnabla^\perp\zeta|$ and that the eigenvalues of
%$\mathcal B$ are given by~\eqref{eig}. On the other hand, the coefficients $\alpha$ and $\beta$ satisfy the following relations
%
%$$(\alpha^2+\beta^2)|\bnabla\zeta(\mathbf x)|^2=|\mathbf p|^2,$$
%$$\alpha^2|\bnabla\zeta(\mathbf x)|^4= |\mathbf p\cdot \bnabla\zeta(\mathbf x)|^2.$$
%This completes the proof.

\end{appendices}

\bibliographystyle{jfm}
\bibliography{biblioFranco}

\end{document}